\newcommand{\Tr}{\operatorname{Tr}}
\DeclareMathOperator{\openone}{\mathds{1}}
\DeclareMathOperator*{\argmax}{arg\,max}
 \newtheorem{lmm}{Lemma}
\newtheorem{thm}{Proposition} \newtheorem{cor}{Corollary}
\newtheorem{conj}{Conjecture}
\newtheorem{algo}{Algorithm}
\begin{document}

\title{On Shor's conjecture on the accessible information of
  quantum dichotomies}

\author{
  \IEEEauthorblockN{Khac Duc An Thai}
  \IEEEauthorblockA{
    \textit{Department  of Computer Science and Engineering, Toyohashi University of Technology, Toyohashi, Japan}
  }
  \IEEEauthorblockN{Michele Dall'Arno}
  \IEEEauthorblockA{
    \textit{Department  of Computer Science and Engineering, Toyohashi University of Technology, Toyohashi, Japan}\\
    michele.dallarno.mv@tut.jp
  }
}  

\date{\today}

\maketitle

\begin{abstract}
  Around  the  turn  of  the century,  Shor  formulated  his
  well-known  and  still-open  conjecture stating  that  the
  accessible information  of any quantum dichotomy,  that is
  the maximum  amount of  classical information that  can be
  decoded from a  binary quantum encoding, is  attained by a
  von Neumann  measurement.  A  quarter of a  century later,
  new  developments   on  the   Lorenz  curves   of  quantum
  dichotomies  in  the  field of  quantum  majorization  and
  statistical comparison may provide  the key to unlock such
  a longstanding  open problem.  Here, we  first investigate
  the tradeoff relations  between accessible information and
  guessing probability  in the binary case,  thus disproving
  the  claimed monotonicity  of the  former quantity  in the
  latter that, if true, would have settled Shor's problem in
  the  qubit  case.   Our  second result  is  to  provide  a
  state-dependent generalization of  extremality for quantum
  measurements, to  characterize state-dependent extremality
  for  qubit  dichotomies,  and  to apply  such  results  to
  tighten previous results on  the accessible information of
  qubit dichotomies.
\end{abstract}

\section{Introduction}

Given a certain  encoding in the form  of classical signals,
the Shannon  mutual information quantifies the  maximum rate
at  which reliable  decoding can  be attained  with such  an
encoding.   The   quantum  generalization  of   a  classical
encoding is given by a quantum state; analogously, a quantum
measurement generalizes a  classical detector.  Accordingly,
the quantum generalization of the Shannon mutual information
is the  accessible information~\cite{Hol73a,  Hol73b, JRW94,
  FC94, Lev95,  Fuc96, Sho00,  Rus02, Kei08,  Kei09, DDJB14,
  JM15} of  the quantum  ensemble of states.  The accessible
information equals the maximum  over quantum measurements of
the Shannon mutual information  between the classical random
variable  labeling  the  states  of the  ensemble,  and  the
outcome  of  the  measurement.   As  such,  the  problem  of
computing the  accessible information of any  given ensemble
is non-convex, and analytical solutions are known only for a
limited number of symmetric ensembles.

In  1995, Levitin  showed~\cite{Lev95}  that the  accessible
information of  any dichotomic  ensemble (or  dichotomy, for
short) of  \textit{pure} states is attained  by its Helstrom
measurement, that is, by  the measurement that maximizes the
success probability  in the  discrimination of  such states.
This  result  established  a   bridge  between  the  Shannon
information-theory (the accessible information) and Bayesian
statistics  (the  guessing  probability~\cite{Bla53,  Tor70,
  AU80, Tor91,  Mat10a, Mat10b, RKW11, MOA11,  Jen12, Bus12,
  Mat14a, Mat14b,  Ren16, Jen16,  BG17, Dal19,  BST19, WW19,
  DBS20, DB24}),  exclusively in  the case  of \textit{pure}
states.  The  fact that  the accessible information  and the
guessing  probability  for   \textit{mixed}  states  are  in
general \textit{not}  attained by  the same  measurement was
observed~\cite{Fuc96}  shortly  afterwards  by Fuchs  in  an
extensive study of the accessible information.

Since two pure  states span a qubit, and  since the Helstrom
measurement of  any ensemble  is a von  Neumann measurement,
Levitin went  on conjecturing that  for any ensemble  of $d$
states in  a $d$-dimensional Hilbert space,  the measurement
attaining the accessible information  would be a von Neumann
measurement.   Shortly   thereafter,  this   conjecture  was
disproved~\cite{Sho00}  by  Shor,  who however  proposed  an
alternative conjecture based on  numerical evidence by Fuchs
and Perez:  that for  any dichotomy in  arbitrary dimension,
the  accessible  information  would  be attained  by  a  von
Neumann  measurement.  In this  work,  we  focus on  such  a
conjecture.

A few  years later,  Keil~\cite{Kei08, Kei09}  proved Shor's
conjecture  for  any,   generally  mixed,  qubit  dichotomy.
Keil's proof is not constructive;  that is, the optimal (von
Neumann) measurement  for any given qubit  dichotomy remains
given  as   the  implicit   solution  of   a  transcendental
equation. Hence, Keil concluded  his analysis by proposing a
conjecture himself, that here we rephrase by saying that the
accessible  information  problem  would  be  a  quasi-convex
problem.   If  true,  this  statement  would  of  course  be
remarkable, as it would make it possible to apply well-known
polynomial-time algorithms, whose convergence is guaranteed,
to the computation of the accessible information.

Not long  afterwards, the claim was  first made~\cite{Paw10}
and  then reiterated~\cite{PB11,  HG12, Paw12},  in contrast
with the aforementioned  results by Fuchs~\cite{Fuc96}, that
for  any  binary (both  in  the  input  and in  the  output)
probability  distribution,  the Shannon  mutual  information
would be a monotone function  of the guessing probability, a
fact that, coupled with Keil's  result, would imply that the
accessible  information  and  the guessing  probability  are
attained  by   the  same   measurement  for   any,  possibly
\textit{mixed}, qubit dichotomy.  While the specific results
of  Refs.~\cite{Paw10, PB11,  Paw12} do  not depend  on this
observation,  the  observation  \textit{per se}  is  clearly
incorrect (see again Ref.~\cite{Fuc96}).

Our first  contribution is  to compute the  tradeoff between
the guessing probability and  the mutual information, and to
derive   necessary  and   sufficient   conditions  for   the
monotonicity of the latter quantity  in the former.  We show
that, already  in the  binary case, monotonicity  holds only
for   a    zero-measure   subset   of    joint   probability
distributions, and we  derive a closed-form characterization
of such a  subset. Hence, Keil's conjecture  remains to date
an open problem.

A new development  to Keil's conjecture came in  the form of
the closed-form characterization~\cite{Dal19}  of the Lorenz
curve of any given qubit dichotomy. For any given dichotomy,
its  Lorenz curve  is the  boundary of  its testing  region,
which  in turn  is  given by  the set  of  ordered pairs  of
probabilities attainable  by the dichotomy over  any quantum
measurement elements, also known  as effects.  Lorenz curves
and  testing   regions  play  a  crucial   role  in  quantum
majorization   and   statistical  comparison,   specifically
through   results   such   as   the   celebrated   Blackwell
theorem~\cite{Bla53} and its  quantum counterpart by Alberti
and  Uhlmann~\cite{AU80}, where  the  task  is to  establish
whether a quantum channel exists that maps a given dichotomy
into another one.

Our second  result is to  leverage known facts  from quantum
majorization  to  introduce  a  state-dependent  concept  of
\textit{extremality}, that generalizes  the usual concept of
extremality~\cite{Par99,DLP05}  for   quantum  measurements.
Specifically,  given  a  set  of states,  a  measurement  is
extremal with respect to such a set whenever it generates an
extremal conditional probability distribution upon the input
of  such states.   In  terms of  statistical comparison  and
quantum majorization,  in the binary case  such a definition
corresponds  to the  set of  effects that  generate extremal
points  of  the  testing region~\cite{AU80,  Tor91,  Mat10a,
  Mat10b, RKW11, MOA11, Jen12, Bus12, Mat14a, Mat14b, Ren16,
  Jen16,  BG17, Dal19,  BST19,  WW19, DBS20,  DB24} (on  the
Lorenz curve) of  the family of states.   We characterize in
closed  form such  a  set  of effects  for  any given  qubit
dichotomy;   in   particular,    we   show   that,   perhaps
surprisingly,  such  a  set   happens  to  be  symmetrically
centered around the Helstrom  measurement that maximizes the
guessing probability  for the  balanced case  (uniform prior
over the two states).  Notice that extremality is in general
a necessary, but not sufficient,  condition for this form of
state-dependent extremality.

In  the  context  of  the accessible  information  of  qubit
dichotomies, the optimization can then be restricted without
loss of  generality only  to those von  Neumann measurements
that generate extremal conditional probability distributions
over  the given  dichotomy, thus  tightening Keil's  result.
Moreover, we can strengthen Keil's conjecture as a statement
on   the   pseudo-convexity   of  such   a   quantity   over
state-dependently extremal measurements.

The     paper    is     structured    as     follows.     In
Section~\ref{sect:classical} we  discuss the problem  of the
information-guessing    tradeoff    relation   for    binary
probability         distributions,          while         in
Section~\ref{sect:quantum} we study its quantum counterpart.
We    conclude    by     summarizing    our    results    in
Section~\ref{sect:conclusion}.

\section{Tradeoff between mutual information and guessing probability}
\label{sect:classical}

Given two finite  and discrete random variables  $X$ and $Y$
with $x  \in \mathcal{X}$  and $y  \in \mathcal{Y}$,  in the
following   we   denote  with   $\mathcal{P}_{|\mathcal{X}|,
  |\mathcal{Y}|}$  the   set  of  their   joint  probability
distributions.        We      denote       with      $\delta
\mathcal{P}_{|\mathcal{X}|, |\mathcal{Y}|}$  the boundary of
$\mathcal{P}_{|\mathcal{X}|,  |\mathcal{Y}|}$,  that is  the
set  of joint  probability distributions  with at  least one
zero   entry.     We   denote   with    $\overline{   \delta
  \mathcal{P}}_{|\mathcal{X}|,  |\mathcal{Y}|}$ the  bulk of
$\mathcal{P}_{|\mathcal{X}|,  |\mathcal{Y}|}$,  that is  the
complement     of    $\delta     \mathcal{P}_{|\mathcal{X}|,
  |\mathcal{Y}|}$.   Notice  that,   except  for  degenerate
cases,    $\delta    \mathcal{P}$   and    $\overline{\delta
  \mathcal{P}}_{|\mathcal{X}|, |\mathcal{Y}|}$ have zero and
unit measure, respectively.

The mutual information~\cite{CT06}  $I(X:Y)$ between $X$ and
$Y$ is given by
\begin{align}
  \label{eq:minfo}
  I(X:Y) := H(X) - H(X|Y),
\end{align}
where $H(X) := -\sum_x p_{X=x}  \log p_{X=x}$ is the entropy
of  $X$  and $H(X|Y)  :=  \sum_y  p_{Y=y} H(X|Y=y)$  is  the
conditional  entropy   of  $X$   given  $Y$.    The  maximum
probability $P_{X|Y}$  of correctly  guessing $X$  given $Y$
(usually referred to as guessing probability~\cite{CT06}) is
given by
\begin{align}
  \label{eq:guess}
  P_{X|Y} := \sum_y p_{Y=y} P_{X|Y=y},
\end{align}
where  $P_{X|Y=y}  :=  \max_x p_{X=x|Y=y}$  is  the  maximum
probability of correctly guessing $X$ given that $Y = y$.

We  derive  (Proposition~\ref{thm:tradeoff})  the  trade-off
relation  between the  mutual information  $I(X:Z)$ and  the
guessing probability  $P_{X|Z}$, when the marginal  $P_X$ is
given, that is
\begin{align}
  \label{eq:tradeoff}
  \max_{\substack{p_{X,Z} \\  P_{X|Z} \le P}} I(X:Z),
\end{align}
for any given  marginal $P_X$ and any $P  \ge 0$.  Moreover,
we derive (Proposition~\ref{thm:characterization}) necessary
and  sufficient  conditions  for   any  given  binary  joint
probability   distribution   $p_{X,Y}$    to   satisfy   the
implication
\begin{align}
  \label{eq:false}
  P_{X|Y} \ge P_{X|Z} \ \Longrightarrow \ I(X:Y) \ge I(X:Z),
\end{align}
for any binary joint probability distribution $p_{X,Z}$.

Notice  that the  guessing  probability  $P_{X|Y}$ is  lower
bounded by
\begin{align}
  \label{eq:minguess}
  P_{X|Y} \ge \max p_X,
\end{align}
with equality  corresponding to  trivial guessing  the input
with     largest     prior      probability.      Due     to
Eq.~\eqref{eq:minguess}, a  trivial necessary  condition for
the  constraint  in   Eq.~\eqref{eq:tradeoff}  and  for  the
hypothesis   in  Eq.~\eqref{eq:false}   to  hold   for  some
$p_{X,Z}$ is that $P_{X|Y} > \max p_X$.

Before proceeding, let us review well-known related results.
Fano's inequality~\cite{CT06} provides an upper bound on the
conditional  entropy  $H(X|Y)$  in  terms  of  the  guessing
probability $P_{X|Y}$, as follows
\begin{align}
  \label{eq:fano}
  h(P_{X|Y})  + \left(1  - P_{X|Y}\right)  \log|\mathcal{X}|
  \ge H(X|Y),
\end{align}
where $h(p)$ denotes the  binary entropy of probability $p$,
that is $h(p) := -p \log p - (1-p) \log(1-p)$.  Notice that,
in the case $|\mathcal{X}|=|\mathcal{Y}|$, it is possible to
strengthen the above  bound by replacing $\log|\mathcal{X}|$
with $\log(|\mathcal{X}|-1)$.

Analogously,   the    Hellman-Raviv   inequality~\cite{HR70}
provides a  lower bound on the  conditional entropy $H(X|Y)$
in terms of the guessing probability $P_{X|Y}$ as follows
\begin{align}
  \label{eq:hellman-raviv}
  H(X|Y) \ge 2 \left(1 - P_{X|Y} \right).
\end{align}

Let  us  consider  now  three  finite  and  discrete  random
variables    $X$,   $Y$,    and   $Z$.     By   substituting
Eqs.~\eqref{eq:fano}    and~\eqref{eq:hellman-raviv}    into
Eq.~\eqref{eq:minfo}, one immediately  obtains the following
relations
\begin{align}
  \label{eq:sufficient}
  & 2 \left( 1 - P_{X|Z} \right) \ge h(P_{X|Y}) + \left( 1 -
  P_{X|Y}                                            \right)
  \log|\mathcal{X}|\nonumber\\ \Longrightarrow\ & I(X:Y) \ge
  I(X:Z), 
\end{align}
and
\begin{align}
  \label{eq:necessary}
  &  I(X:Y)   \ge  I(X:Z)  \nonumber\\   \Longrightarrow\  &
  h(P_{X|Z}) + \left( 1  - P_{X|Z} \right) \log|\mathcal{X}|
  \ge 2 \left( 1 - P_{X|Y} \right).
\end{align}
Notice       again       that,       in       the       case
$|\mathcal{X}|=|\mathcal{Y}|$, it is  possible to strengthen
the  above  bounds  by  replacing  $\log|\mathcal{X}|$  with
$\log(|\mathcal{X}|-1)$.

Notice   that   Eq.~\eqref{eq:false}   first   appeared   in
Refs.~\cite{Paw10, PB11}, where it  has been claimed to hold
under the  sole assumption  that $|\mathcal{X}| =  2$, where
$|\mathcal{X}|$ denotes  the cardinality  of the  support of
$X$.  Ref.~\cite{HG12} showed that Eq.~\eqref{eq:false} does
not hold already for $|\mathcal{X}| = |\mathcal{Y}| = 2$ and
$|\mathcal{Z}|  =   4$.   However,   both  Refs.~\cite{HG12}
and~\cite{Paw12} still claimed  that Eq.~\eqref{eq:false} at
least holds  in the completely binary  case $|\mathcal{X}| =
|\mathcal{Y}|=     |\mathcal{Z}|     =     2$.      Finally,
Eq.~\eqref{eq:sufficient} also appears in Ref.~\cite{Paw12},
but the proof therein only  covers the case $| \mathcal{X} |
= 2$.

\subsection{Binary Joint Probability distributions}

In the following we focus  on the binary case $|\mathcal{X}|
=  |\mathcal{Y}| =  | \mathcal{Z}  |  = 2$.   We derive  the
trade-off relation in Eq.~\eqref{eq:tradeoff} and a complete
characterization of the distributions $p_{X,Y}$ that satisfy
Eq.~\eqref{eq:false} for any $p_{X,Z}$. From our results, it
follows  that  for  almost every  binary  joint  probability
distribution  $p_{X,Y}$  for  which  Eq.~\eqref{eq:minguess}
holds with  inequality, except a zero-measure  subset of the
boundary $\delta  \mathcal{P}_{2,2}$, there exists  a binary
joint  probability  distribution   $p_{X,Z}$  that  violates
Eq.~\eqref{eq:false}.  Moreover, our  proof is constructive,
that   is    a   distribution   $p_{X,Z}$    that   violates
Eq.~\eqref{eq:false} is given in  closed form as an explicit
function of $p_{X, Y}$ only.

We need  to introduce a convenient  parameterization for any
given binary  joint probability distribution. Let  us define
the following matrices
\begin{align*}
  M_0  := \begin{pmatrix}  1 &  1  \\ 1  & 1  \end{pmatrix},
  \qquad   M_1  :=   \begin{pmatrix}   1  &   1   \\  -1   &
    -1 \end{pmatrix},\\ \qquad M_2 := \begin{pmatrix} 1 & -1
    \\ 1 & -1 \end{pmatrix}, \qquad M_3 := \begin{pmatrix} 1
    & -1 \\ -1 & 1 \end{pmatrix}.
\end{align*}
Notice that  matrices $\{ M_k/2 \}_{k=0}^3$  are orthonormal
with  respect  to  the   Hilbert-Schmidt  product,  that  is
$\Tr[M_k^T M_j]/4  = \delta_{k,j}$.  Any given  binary joint
probability distribution $p_{X,Y}$ can be written as
\begin{align*}
  p_{X,Y}  = \begin{pmatrix}
  p_{X=0,Y=0} &  p_{X=0,Y=1} \\  p_{X=1,Y=0} &  p_{X=1,Y=1}
  \end{pmatrix} =   \frac14 \sum_{k=0}^3 c_k M_k
\end{align*}
where coefficients $\{ c_k \}_{k=0}^3$ can be readily found by
\begin{align*}
  c_k = \Tr[p_{X,Y} M_k^T].
\end{align*}
Hence,  one has  $c_0  = 1$. Then, we  set $c  :=  (1, a,  b,
\lambda)$ and make the identification
\begin{align}
  \label{eq:joint}
  p(a, b, \lambda) := p_{X,Y}.
\end{align}

With  this   parameterization,  by  direct   computation  the
marginals of $P_{X,Y}$ are given by
\begin{align}
  \label{eq:marginalx}
  p_X \left(  a \right) & =  \left( \frac{1+a}2, \frac{1-a}2
  \right),\\
  \label{eq:marginaly}
  p_Y \left( b \right) & = \left( \frac{1+b}2, \frac{1-b}2 \right),
\end{align}
from  which  one immediately  has  $a,  b \in  [-1,1]$.   By
explicit computation, we have:
\begin{align*}
  p_{X, Y} = \frac14 \sum_{k=0}^3 c_k M_k
  = \frac14
  \begin{pmatrix}
    1 + a + b + \lambda & 1 + a - b - \lambda \\ 1 - a + b -
    \lambda & 1 - a - b + \lambda
  \end{pmatrix}
\end{align*}
Hence,  the non-negativity  of  each entry  of $p_{X,Y}$  is
equivalent to the condition
\begin{align}
  \label{eq:domainlambda}
  \lambda \in \left[ -1 +  \left| a+b \right|, 1 - \left|a-b
    \right|\right].
\end{align}

Notice  that  the  conditions  $-1  \le  a,  b  \le  1$  and
Eq.~\eqref{eq:domainlambda} are equivalent to the conditions
$-1 \le a, \lambda \le 1$ and
\begin{align}
  \label{eq:domainb}
  b \in \left[ -1 + \left| a + \lambda \right|, 1 - \left| a
    - \lambda \right| \right].
\end{align}

\subsection{Properties of Guessing Probability and Mutual Information} \label{sec:properties}

Now we  discuss some properties of  the guessing probability
$P_{X|Y}$ and  of the mutual information  $I(X:Y)$ that will
be useful in the following.

\begin{lmm}
  The guessing probability $P_{X|Y}$ is given by
  \begin{align}
    \label{eq:guessbin}
    P_{X|Y}  =  \frac{1+\max\left(  \left| a  \right|,  \left|
      \lambda \right| \right)}2 =: P(a, \lambda).
  \end{align}
\end{lmm}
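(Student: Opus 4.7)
The plan is to derive the formula by direct computation from the definition $P_{X|Y} = \sum_y p_{Y=y} \max_x p_{X=x|Y=y} = \sum_y \max_x p_{X=x,Y=y}$, where the second equality uses $p_{X=x|Y=y} = p_{X=x,Y=y}/p_{Y=y}$.

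First, I would write out the four entries of $p_{X,Y}$ from the explicit parameterization already given, namely
\begin{align*}
p_{X=x,Y=y} = \frac{1 + (-1)^x a + (-1)^y b + (-1)^{x+y} \lambda}{4}.
\end{align*}
Then for each fixed $y \in \{0,1\}$, the two relevant entries $p_{X=0,Y=y}$ and $p_{X=1,Y=y}$ share the term $(1 + (-1)^y b)/4$ and differ by $\pm(a + (-1)^y \lambda)/4$. Using the elementary identity $\max(u,v) = (u+v)/2 + |u-v|/2$, the maximum over $x$ for each $y$ becomes
\begin{align*}
\max_x p_{X=x,Y=y} = \frac{1 + (-1)^y b}{4} + \frac{|a + (-1)^y \lambda|}{4}.
\end{align*}

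Next, I would sum over $y \in \{0,1\}$: the $b$-dependent terms cancel, leaving
\begin{align*}
P_{X|Y} = \frac{1}{2} + \frac{|a + \lambda| + |a - \lambda|}{4}.
\end{align*}
The final step is the well-known identity $|a+\lambda| + |a-\lambda| = 2\max(|a|, |\lambda|)$, which can be verified by checking the two cases $|a| \ge |\lambda|$ and $|\lambda| \ge |a|$ separately (in each case, the two absolute-value terms have the same sign of dominant variable and combine into twice its magnitude). Substituting gives the claimed expression $P(a,\lambda) = (1 + \max(|a|,|\lambda|))/2$.

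There is no real obstacle here; the statement is a direct consequence of the parameterization, and the only mildly nontrivial step is the absolute-value identity, which is routine. A sanity check worth mentioning is that the result is independent of the marginal parameter $b$, which reflects the fact that guessing $X$ from $Y$ only depends on the conditional distributions $p_{X|Y=y}$, weighted appropriately; the cancellation of $b$ above confirms this consistency.
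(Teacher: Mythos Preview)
Your proof is correct and follows essentially the same route as the paper: both reduce $P_{X|Y}$ to $\tfrac{1}{4}(2+|a+\lambda|+|a-\lambda|)$ by taking $\max_x p_{X=x,Y=y}$ for each $y$ and summing, and then invoke the identity $|a+\lambda|+|a-\lambda|=2\max(|a|,|\lambda|)$. Your use of $\max(u,v)=(u+v)/2+|u-v|/2$ and your two-case verification of the absolute-value identity are slightly more streamlined than the paper's explicit four-case analysis, but the argument is the same in substance.
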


\begin{proof}
  By the definition given by Eq. \eqref{eq:guess}, one has:
  \begin{align*}
    P_{X|Y} &  = \sum_y  p_{Y=y} P_{X|Y=y} =  \sum_y p_{Y=y}
    \max_x  p_{X=x|Y=y}   \\  &  =  \sum_y   p_{Y=y}  \max_x
    \frac{p_{X, Y}}{p_{Y=y}} = \sum_y \max_x p_{X, Y} \\ & =
    \frac{1+b}{2} \max_x p_{X=x|Y=0}  + \frac{1-b}{2} \max_x
    p_{X=x|Y=1}.
  \end{align*}
  Using Bayes' Theorem, one has
  \begin{align*}
    p_{X|Y} = \frac{p_{X, Y}}{p_Y} = \frac14
    \begin{pmatrix}
      \frac{1 + a + b  + \lambda}{\frac{1+b}{2}} & \frac{1 +
        a - b - \lambda}{\frac{1-b}{2}} \\ \frac{1 - a + b -
        \lambda}{\frac{1+b}{2}}  &   \frac{1  -  a  -   b  +
        \lambda}{\frac{1-b}{2}}
    \end{pmatrix}.
  \end{align*}
  Hence,
  \begin{align*}
    P_{X|Y} & = \frac14 (\max{(1 + a  + b + \lambda, 1 - a +
      b - \lambda)} \\ & + \max{(1 +  a - b - \lambda, 1 - a
      - b + \lambda)} ) \\ & = \frac{2 + \abs{a + \lambda} +
      \abs{a - \lambda}}{4}.
  \end{align*}
  To complete the proof, we need to prove that:
  \begin{align*}
    |a + \lambda| + |a - \lambda| = 2 \max(|a|, |\lambda|).
  \end{align*}
  Since  $a$ and  $\lambda$  play dual  roles, consider  two
  cases:
  \subsection*{Case 1: $a$ and $\lambda$ have the same sign}
  \begin{itemize}
  \item  Assume  $|a|  \ge  |\lambda|$, with  $a  >  0$  and
    $\lambda > 0$:
    \begin{align*}
      |a +  \lambda| +  |a - \lambda|  = a +  \lambda +  a -
      \lambda = 2a = 2\max(|a|, |\lambda|).
    \end{align*}
    Similarly,  if $a  < 0$  and $\lambda  < 0$,  the result
    holds by symmetry.
  \item If  $|a| < |\lambda|$,  we reverse the roles  of $a$
    and $\lambda$, leading to the same conclusion.
  \end{itemize}
  \subsection*{Case 2: $a$ and $\lambda$ have different signs}
  \begin{itemize}
  \item  Assume  $|a| \geq  |\lambda|$,  with  $a >  0$  and
    $\lambda < 0$:
    \begin{align*}
      |a +  \lambda| +  |a - \lambda|  = a +  \lambda +  a -
      \lambda = 2a = 2\max(|a|, |\lambda|).
    \end{align*}
    Similarly, for  $a <  0$ and $\lambda  > 0$,  the result
    holds by symmetry.
  \item If  $|a| < |\lambda|$,  we reverse the roles  of $a$
    and $\lambda$, leading to the same conclusion.
  \end{itemize}
  By evaluating all cases, we conclude that
  \[
  |a + \lambda| + |a - \lambda| = 2 \max(|a|, |\lambda|),
  \]
  which completes the proof.
\end{proof}

Notice   that   the   guessing  probability   $P_{X|Y}$   is
independent of parameter $b$, is  even in $a$ and $\lambda$,
that is
\begin{align}
  \label{eq:guesssym}
  P(a, \lambda)  = P(-a, \lambda)  = P(a, -\lambda)  = P(-a,
  -\lambda).
\end{align}

The mutual information $I(X:Y)$  is invariant upon permuting
rows  or  columns  of  $p_{X,Y}$, which  correspond  to  the
transformations $(a, b, \lambda)  \to (-a, b, -\lambda)$ and
$(a, b, \lambda) \to (a, -b, -\lambda)$, hence
\begin{align}
  \label{eq:infosym}
  I(a, b, \lambda) = I(-a, b, -\lambda) = I(a, -b, -\lambda)
  = I(-a, -b, \lambda).
\end{align}

Due  to Eq.~\eqref{eq:guesssym}  and Eq.~\eqref{eq:infosym},
w.l.o.g. we can restrict to $a \ge 0$ and $\lambda \ge 0$.

Starting  from  the  well-known   convexity  of  the  mutual
information  as a  function of  the conditional  probability
distribution and motivated  by the fact that  the maximum of
convex  functions  is  attained  on the  boundary  of  their
domain,   in   the   following   lemma  we   arrive   to   a
characterization of  the analytical behavior of  the mutual
information on its boundary.

\begin{lmm}
  \label{lmm:info}
  The mutual information $I(X:Y)$ as a function of $\lambda$
  has the following properties for any $a \ge 0$ and $b$:
  \begin{enumerate}
  \item  \label{item:convlambda}   $I(X:Y)$  is   convex  in
    $\lambda$;
  \item \label{item:minlambda}  $I(X:Y)$ attains  its unique
    global minimum in $\lambda = ab$, and $I(a, b, ab) = 0$;
  \end{enumerate}

  The mutual information  $I(X:Y)$ as a function  of $b$ has
  the following  properties for any  $a \ge 0$  and $\lambda
  \ge 0$:
  \begin{enumerate}[resume]
  \item \label{item:convb} $I(X:Y)$ is convex in $b$;
  \item \label{item:minb} $I(X:Y)$ attains its unique global
    minimum in $b = \lambda/a$ if  $\lambda \le a$ and in $b
    = a/\lambda$ otherwise;
  \item \label{item:maxb} $I(X:Y)$ attains its unique global
    maximum on  $\delta \mathcal{P}_{2,2}$,  specifically in
    $b = \lambda + a - 1$.
  \end{enumerate}

  The mutual information $I(X:Y) |_{b = \lambda + a - 1}$ as
  a function  of $\lambda$ has the  following properties for
  any $a \ge 0$:
  \begin{enumerate}[resume]
  \item \label{item:convlambda2} $I(X:Y) |_{b  = \lambda + a
    - 1}$ is convex in $\lambda$.
  \end{enumerate}
\end{lmm}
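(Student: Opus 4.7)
The plan is to exploit the fact that each entry $p_{ij}$ of $p_{X,Y}$ is affine in each of the parameters $a$, $b$, and $\lambda$. This reduces every claim to either a one-variable second-derivative computation whose sign can be controlled, or to locating the critical points of a convex function of one real variable. I will write $f(p) := -p \log p$, which is strictly concave on $(0,1)$, and use the identity $h(p) = f(p) + f(1-p)$.

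For items~\ref{item:convlambda}--\ref{item:minlambda}, I will first observe that fixing $a$ and $b$ freezes both marginals (since $p_X$ depends only on $a$ and $p_Y$ only on $b$), so $I$ varies in $\lambda$ only through $-H(X,Y)$. Since the $p_{ij}$ are affine in $\lambda$, a direct computation gives $\partial^2 I/\partial \lambda^2 = \sum_{ij} (\partial p_{ij}/\partial \lambda)^2 / p_{ij} > 0$, yielding strict convexity. Setting $\partial I/\partial \lambda = 0$ reduces to $p_{00} p_{11} = p_{01} p_{10}$, which in the present parameterization becomes $\lambda = ab$; at this point the joint factorizes as $p_X \otimes p_Y$, so $I = 0$, and strict convexity upgrades this to the unique global minimum.

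Items~\ref{item:convb}--\ref{item:minb} are more delicate because varying $b$ also changes $p_Y$. I will compute $\partial^2 I/\partial b^2 = -1/(1-b^2) + \frac{1}{16} \sum_{ij} 1/p_{ij}$ directly, then apply the arithmetic mean--harmonic mean inequality column-wise, $\sum_i 1/p_{ij} \ge 4/p_Y(j)$; summing over $j$ reproduces exactly $16/(1-b^2)$, so convexity follows. Setting $\partial I/\partial b = 0$ and grouping factors of the form $(1 \pm b)^2 - (a \pm \lambda)^2$ reduces the critical equation to $[(1-b)(a+\lambda)]^2 = [(1+b)(a-\lambda)]^2$, whose solutions in $b$ are $\lambda/a$ and $a/\lambda$. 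Exactly one of these lies in the valid interval $[a+\lambda-1,\,1-|a-\lambda|]$ depending on the sign of $\lambda - a$, and convexity then identifies it as the unique global minimum, reducing to the independence condition $\lambda = ab$ (hence $I = 0$) precisely when $\lambda \le a$.

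For item~\ref{item:maxb}, convexity in $b$ forces the maximum over the valid interval to be attained at one of its two endpoints, both of which already lie on $\delta \mathcal{P}_{2,2}$. To single out $b = a + \lambda - 1$, I will substitute both endpoint distributions into $I$ and, using $h(p) = f(p) + f(1-p)$ to cancel matching terms, reduce the comparison to the monotonicity of an auxiliary function of the form $g(x) = f(x + c) - f(x)$ evaluated at the conjugate arguments $x$ and $1 - x$; strict decrease of $g$ combined with $a > 0$ then yields $I(b_{\min}) > I(b_{\max})$. I expect this endpoint comparison to be the main obstacle, since it is not a direct convexity argument and requires the right algebraic reorganization. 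Finally, for item~\ref{item:convlambda2}, substituting $b = a + \lambda - 1$ kills $p_{10}$ and leaves a three-term entropy expression whose second derivative in $\lambda$ telescopes, after cancellation of the $1/(a+\lambda)$ contributions, to $\frac{1}{2}[(1-\lambda)^{-1} - (2-a-\lambda)^{-1}]$, which is non-negative exactly because $a \le 1$.
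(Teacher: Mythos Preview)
Your approach to items~1--4 and~6 is essentially the paper's: the same second-derivative formulas, the same critical-point equations, and the same final expression $\tfrac{1-a}{2(1-\lambda)(2-a-\lambda)}$ for item~6. Your AM--HM bound in item~3 is just a repackaging of the paper's estimate $\tfrac{1+b}{(1+b)^2-(a+\lambda)^2}\ge\tfrac{1}{1+b}$, and your reduction in item~4 to $[(1-b)(a+\lambda)]^2=[(1+b)(a-\lambda)]^2$ is exactly what the paper's first-derivative condition amounts to.

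Item~5 is where you genuinely diverge. The paper sets $\Delta(a,\lambda):=I(\text{other endpoint})-I(a,a+\lambda-1,\lambda)$, notes $\Delta(a,0)=0$ (resp.\ $\Delta(0,\lambda)=0$ when $\lambda>a$), and then shows $\partial\Delta/\partial\lambda=\tfrac12\log\tfrac{1-\lambda^2}{(2-a)^2-\lambda^2}\le0$ (resp.\ $\partial\Delta/\partial a\le0$), concluding $\Delta\le0$ by integrating from the boundary. Your plan instead substitutes both endpoints directly and, after cancellations, reduces the difference $I(b_{\min})-I(b_{\max})$ to $g(x_1)-g(x_2)$ with $g(x)=f(x+c)-f(x)$; since $g'(x)=\log\tfrac{x}{x+c}<0$ for $c>0$, the sign follows from $x_1<x_2$. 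Both routes are valid and comparable in length: the paper's yields a clean closed-form derivative that treats the two subcases symmetrically, while yours is static (no auxiliary differentiation in $\lambda$ or $a$) and makes transparent that the comparison is ultimately a concavity-of-$f$ statement. One caveat: your sketch mentions ``conjugate arguments $x$ and $1-x$'' and the hypothesis ``$a>0$'', but these belong to \emph{different} subcases --- for $\lambda\le a$ one finds $c=(1-a)/2$ with arguments $(1\mp\lambda)/2$ (genuinely conjugate, strictness needing $a<1$), whereas for $\lambda>a$ one finds $c=a$ with arguments $(1-a)/2$ and $(2-a-\lambda)/2$ (not conjugate, strictness needing $a>0$). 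The mechanism works in both, but keep the two cases separate when you write it out.
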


\begin{proof}
  Remember  that $I$  is a  smooth  function of  all of  its
  variables.
  
  Property~\ref{item:convlambda} immediately follows from
  \begin{align*}
    \frac{\partial^2 I}{\partial \lambda^2}  = \frac1{16} \sum_{x, y}
      \frac1{p_{X = x, Y = y}} \ge 0.
  \end{align*}

  Property~\ref{item:minlambda}  immediately follows  from the
  fact that
  \begin{align*}
    \frac{\partial  I}{\partial  \lambda}   =  \frac14  \log
    \frac{p_{X = 0, Y  = 0} \; p_{X = 1, Y = 1}}{p_{X  = 0, Y =
        1} \; p_{X = 1, Y = 0}} = 0
  \end{align*}
  if and only  if $\lambda = a b$, and  from the observation
  that $\lambda = ab$ satisfies Eq.~\eqref{eq:domainlambda}.
  
  Property~\ref{item:convb} follows from the fact that
  \begin{align*}
    & \frac{\partial^2 I}{\partial b^2}  \\ = & \frac1{b^2 -
      1}  + \frac1{16}  \sum_{x,  y} \frac1{p_{X  =  x, Y  =
        y}}\\  =   &  \frac1{b^2  -  1}   +  \frac12  \left(
    \frac{\left(1  +  b   \right)}{\left(1  +  b\right)^2  -
      \left(a+\lambda\right)^2}   +    \frac{\left(1   -   b
      \right)}{\left(1         -        b\right)^2         -
      \left(a-\lambda\right)^2} \right) \\  \ge & \frac1{b^2
      - 1} +  \frac12 \left( \frac1{1  + b} + \frac1{1  - b}
    \right) = 0.
  \end{align*}
  
  Property~\ref{item:minb} immediately follow  from the fact
  that
  \begin{align*}
    \frac{\partial I}{\partial b}  = \frac14 \log \frac{p_{X
        = 0, Y = 0} \; p_{X = 1, Y = 0} \; p_{Y = 1}}{p_{X =
        0, Y = 1} \; p_{X = 1, Y = 1} \; p_{Y = 0}} = 0
  \end{align*}
  if and only if $b = \lambda/a$ or $b = a / \lambda$.

  To prove property~\ref{item:maxb},  we need to distinguish
  two cases.

  If $\lambda \le a$ one has
  \begin{align*}
    \Delta \left(  a, \lambda \right) :=  I \left( a, -  a +
    \lambda + 1, \lambda \right) - I \left( a, a + \lambda -
    1, \lambda \right).
  \end{align*}
  By  explicit   computation,  one   has  $\Delta(a,   0)  =
  0$. Moreover, one has
  \begin{align*}
    \frac{\partial \Delta}{\partial \lambda}  = \frac12 \log
    \frac{1 - \lambda^2}{\left( 2 - a \right)^2 - \lambda^2}
    \le 0.
  \end{align*}
  Hence,  for  any  $\lambda  \ge 0$  one  has  $\Delta  (a,
  \lambda) \le 0$.
  
  If $\lambda > a$ one has
  \begin{align*}
    \Delta  \left( a,  \lambda \right)  := I  \left( a,  a -
    \lambda + 1, \lambda \right) - I \left( a, a + \lambda -
    1, \lambda \right).
  \end{align*}
  By  explicit computation,  one has  $\Delta(0, \lambda)  =
  0$. Moreover, one has
  \begin{align*}
    \frac{\partial  \Delta}{\partial   a}  =   \frac12  \log
    \frac{1 - a^2}{\left(  2 - \lambda \right)^2  - a^2} \le
    0.
  \end{align*}
  Hence, for any $a \ge 0$  one has $\Delta (a, \lambda) \le
  0$.

  Property~\ref{item:convlambda2} immediately follows from
  \begin{align*}
  \frac{\partial^2 I(X:Y) |_{b  = \lambda + a - 1}}{\partial 
  \lambda^2} = \frac{1 - a}{2 \left( 1 - \lambda \right) 
  \left( 2 - a - \lambda \right)} \ge  0.
  \end{align*}
\end{proof}

The  results  of   Lemma~\ref{lmm:info}  are  summarized  in
Fig.~\ref{fig:info}.
\begin{figure}[h!]
  \centering
  \input{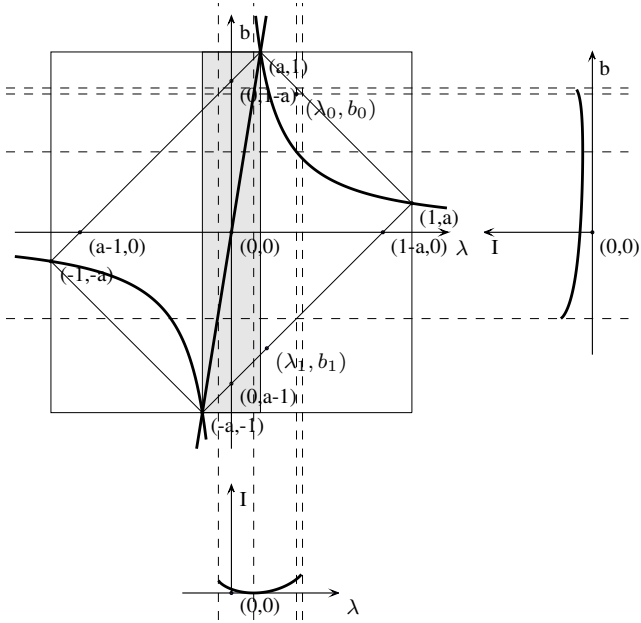}
  \caption{Mutual   information    of   binary   probability
    distribution $p_{X,Y}$ on the  $\lambda, b$ plane, for a
    certain  value of  $a$.   The  rectangle represents  the
    feasible region  satisfying the  constraints on  $b$ and
    $\lambda$.}
  \label{fig:info}
\end{figure}

\subsection{Tradeoff relations between mutual information and guessing probability}

Our first main result is  to derive the tradeoff relation in
Eq.~\eqref{eq:tradeoff} between the  mutual information $I(X
: Y)$ and the guessing probability $P_{X|Y}$, over any binary
joint probability distribution $p_{X,  Y}$ and for any given
marginal probability $p_X$.

\begin{thm}[Information-guessing tradeoff]
  \label{thm:tradeoff}
  For  any given  binary  marginal probability  distribution
  $p_X$ (w.l.o.g.  take  $p_{X = 0} \ge p_{X =  1}$) and any
  upper bound $P$ on the guessing probability $P_{X|Y}$, the
  binary  joint  probability   distribution  $p_{X,Y}$  that
  maximizes the mutual information $I(X : Y)$ is given by
  \begin{align}
    \label{eq:tradeoff}
    \argmax_{\substack{p_{X,  Y}\\\sum_y  p_{X,  Y  =  y}  =
        p_X\\ P_{X|Y} \le  P}} I(X : Y) =  p \left(a^*, b^*,
    \lambda^* \right).
  \end{align}
  where $a^* = 2  p_{X = 0} - 1$, $b^* =  \lambda^* + (a^* -
  1)$, $\lambda^* = 2 P - 1$.
\end{thm}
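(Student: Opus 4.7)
The plan is to reduce the three-parameter maximization over $(a,b,\lambda)$ to a one-dimensional problem by combining the symmetries of $I$ and $P_{X|Y}$ with the structural results of Lemma~\ref{lmm:info}. First, the marginal constraint together with Eq.~\eqref{eq:marginalx} pins down $a = a^* := 2p_{X=0} - 1 \ge 0$, while the invariances Eq.~\eqref{eq:infosym} and Eq.~\eqref{eq:guesssym} let us restrict to $\lambda \ge 0$ without loss of generality. By Eq.~\eqref{eq:guessbin}, the constraint $P_{X|Y} \le P$ then becomes $\lambda \le 2P - 1$, a non-empty interval since Eq.~\eqref{eq:minguess} forces $P \ge (1+a)/2$.

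Second, for each fixed feasible $(a,\lambda)$, I would maximize over $b$ by applying Property~\ref{item:maxb} of Lemma~\ref{lmm:info}, which identifies the unique maximizer as $b = \lambda + a - 1$; this is unaffected by the additional constraint $\lambda \le 2P-1$ because the latter does not depend on $b$. The problem thus reduces to the one-variable optimization of $\lambda \mapsto I(X:Y)|_{b=\lambda+a-1}$ on $[0, 2P-1]$.

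Third, Property~\ref{item:convlambda2} tells us that this restricted mutual information is convex in $\lambda$, so its maximum on the compact interval is attained at one of the two endpoints. To identify the correct endpoint, I would exploit the fact that along the boundary $b = \lambda + a - 1$ the entry $p_{X=1,Y=0}$ vanishes, which makes $H(X|Y)$ collapse to a single term of the form $\tfrac{2-a-\lambda}{2}\, h\bigl((1-a)/(2-a-\lambda)\bigr)$. A short computation shows this quantity is strictly decreasing in $\lambda$ throughout the feasible interval, so $I = H(X) - H(X|Y)$ is strictly increasing and the unique maximizer is $\lambda^* = 2P - 1$, yielding $b^* = \lambda^* + a^* - 1$ as claimed.

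The main obstacle I anticipate is this last monotonicity claim, since convexity alone leaves open which endpoint attains the maximum. Rewriting the conditional entropy as $H(X|Y) = \tfrac{1}{2}\bigl[u\log u - (u-s)\log(u-s) - s\log s\bigr]$ with $u := 2-a-\lambda$ and $s := 1-a$ reveals a derivative $\log\bigl(u/(u-s)\bigr) > 0$ with respect to $u$; since $u$ decreases with $\lambda$, the desired monotonicity follows, and uniqueness of the argmax is inherited from the uniqueness in Property~\ref{item:maxb} together with strictness of this decrease.
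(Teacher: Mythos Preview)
Your proposal is correct and follows essentially the same three-step reduction as the paper: fix $a=a^*$ from the marginal constraint and restrict to $\lambda\ge 0$ by symmetry; optimize over $b$ via Property~\ref{item:maxb}; then optimize over $\lambda\in[0,2P-1]$ using Property~\ref{item:convlambda2}.

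The only divergence is in how you pin down the right endpoint in the last step. The paper appeals to Property~\ref{item:convlambda2} together with Property~\ref{item:minlambda}: along the boundary $b=\lambda+a-1$, the condition $\lambda=ab$ of Property~\ref{item:minlambda} (where $I=0$) becomes $\lambda=-a\le 0$, so the convex function $g(\lambda):=I(a,\lambda+a-1,\lambda)$ has its global minimum at (or to the left of) the left endpoint of $[0,2P-1]$, hence is nondecreasing there and maximized at $\lambda^*=2P-1$. You instead compute $H(X|Y)$ explicitly on the boundary and differentiate. Both routes are valid; yours is a touch more self-contained (it does not require interpreting Property~\ref{item:minlambda}, which is stated for fixed $b$, in the setting where $b$ varies with $\lambda$), while the paper's is shorter once that interpretation is granted. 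Your treatment of uniqueness via the strict monotonicity and the uniqueness in Property~\ref{item:maxb} is also a small addition not spelled out in the paper. One cosmetic remark: the $u$-derivative of your expression for $H(X|Y)$ carries a factor $\tfrac12$, so it is $\tfrac12\log\bigl(u/(u-s)\bigr)$; the sign argument is of course unaffected.
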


\begin{proof}
  One has
  \begin{align*}
    &  \max_{\substack{p_{X,  Y}\\\sum_y  p_{X,  Y  =  y}  =
        p_X\\   P_{X|Y}  \le   P}}   I(X  :   Y)   \\  =   &
    \max_{\substack{\lambda\\\lambda   \le   2P-1}}   \left(
    \max_b  I \left(a^*,  b, \lambda  \right)\right) \\  = &
    \max_{\substack{\lambda\\\lambda \le 2P-1}} I \left(a^*,
    \lambda + a^*  - 1, \lambda \right) \\ =  & I \left(a^*,
    b^*, \lambda^* \right).
  \end{align*}
  Where the first equality follows  from the facts that, due
  to Eq.~\eqref{eq:marginalx}, the  constraint $\sum_y p_{X,
    Y = y} = p_X$ immediately gives  $a^* = 2 p_{X = 0} - 1$
  and,  due   to  Eq.~\eqref{eq:guessbin},   the  constraint
  $P_{X|Y} \le  P$ is equivalent  to $a^* \le  2 P -  1$ and
  $\lambda \le  2 P -  1$; the second equality  follows from
  Property~\ref{item:maxb}   of  Lemma~\ref{lmm:info};   the
  third           equality            follows           from
  Property~\ref{item:convlambda2}                        and
  Property~\ref{item:minlambda} of Lemma~\ref{lmm:info}.
\end{proof}

Our second main result is the characterization of the set of
binary joint  probability distributions $p_{X,Y}$  such that
Eq.~\eqref{eq:false} holds for  any binary joint probability
distribution $p_{X,Z}$.

\begin{thm}[Non-monotonicity]
  \label{thm:characterization}
  A   binary   joint  probability   distribution   $p_{X,Y}$
  satisfies  Eq.~\eqref{eq:false}   for  any   binary  joint
  probability   distribution  $p_{X,Z}$   if  and   only  if
  \begin{align*}
    \Tr p_{X, Y} \ge p_{X = 0},
  \end{align*}
  and
  \begin{align*}
    P_{X|Y} = p_{Y = 0} + \left(1 - p_{X = 0} \right).
  \end{align*}
\end{thm}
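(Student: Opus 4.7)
The plan is to reformulate Eq.~\eqref{eq:false} as the requirement that $I(X:Y)$ saturate the tradeoff bound of Proposition~\ref{thm:tradeoff} for $I(X:Z)$ at guessing level $P_{X|Y}$. Since $p_X$ is fixed by conditioning on $X$, Proposition~\ref{thm:tradeoff} applied with the same marginal gives
\[
\max_{p_{X,Z}: P_{X|Z}\le P_{X|Y}} I(X:Z) = I(a, \lambda^*+a-1, \lambda^*), \quad \lambda^* := 2P_{X|Y}-1,
\]
so the desired implication is equivalent to $I(X:Y)$ matching this upper bound. Using the symmetries in Eqs.~\eqref{eq:guesssym} and~\eqref{eq:infosym}, I would reduce to $a,\lambda\ge 0$, in which case $\lambda^* = \max(a,\lambda)$.

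Next I would translate the two conditions of the statement into the $(a,b,\lambda)$ parametrization. A direct trace computation yields $\Tr p_{X,Y} = (1+\lambda)/2$ and $p_{X=0} = (1+a)/2$, so $\Tr p_{X,Y} \ge p_{X=0}$ is equivalent to $\lambda \ge a$. Expanding $p_{Y=0}+(1-p_{X=0}) = (2+b-a)/2$ and equating with $P_{X|Y} = (1+\max(a,\lambda))/2$ shows that, under $\lambda \ge a$, the condition $P_{X|Y} = p_{Y=0}+(1-p_{X=0})$ is precisely the boundary constraint $b = a+\lambda-1$.

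With these translations the argument splits into two cases. When $\lambda \ge a$, the tradeoff maximum reduces to $I(a, a+\lambda-1, \lambda)$, and Property~\ref{item:maxb} of Lemma~\ref{lmm:info} asserts that this maximum is attained uniquely at $b = a+\lambda-1$; hence $I(X:Y)$ saturates the bound if and only if $b = a+\lambda-1$, as required. When $\lambda < a$, the tradeoff maximum equals $I(a, 2a-1, a)$, and I would establish the strict chain
\[
I(a,b,\lambda) \le I(a, a+\lambda-1, \lambda) < I(a, 2a-1, a),
\]
where the first inequality is Property~\ref{item:maxb} and the second is strict monotonicity of $I|_{b=a+\lambda-1}$ in $\lambda$ on the feasible range. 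This exhibits an explicit $p_{X,Z} = p(a, 2a-1, a)$ violating Eq.~\eqref{eq:false}, so this case is excluded.

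The main technical step is justifying the strict monotonicity in the $\lambda < a$ case: convexity of $I|_{b=a+\lambda-1}$ from Property~\ref{item:convlambda2} alone is insufficient, so the location of its unique minimum is also needed. For this I would invoke Property~\ref{item:minlambda}: the restriction vanishes precisely when $\lambda = a\,b = a(a+\lambda-1)$, i.e., at $\lambda = -a$, which lies strictly below the feasible interval $[0,1]$. Convexity then forces $I|_{b=a+\lambda-1}$ to be strictly increasing on $[0,1]$, yielding the strict inequality for every $0 \le \lambda < a$. Combining the two cases and undoing the WLOG symmetries delivers the claimed iff characterization.
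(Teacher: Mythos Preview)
Your proposal is correct and follows essentially the same strategy as the paper: both reformulate Eq.~\eqref{eq:false} as the requirement that $I(X:Y)$ saturate the tradeoff maximum of Proposition~\ref{thm:tradeoff}, then translate the resulting parameter constraints into the two stated conditions. The paper invokes the uniqueness of the maximizer in Proposition~\ref{thm:tradeoff} directly to force $p_{X,Y}=p(a^*,b^*,\lambda^*)$, whereas you unpack that uniqueness via the case split $\lambda\ge a$ versus $\lambda<a$ and your explicit strict-monotonicity argument for $I|_{b=a+\lambda-1}$ (combining Properties~\ref{item:convlambda2} and~\ref{item:minlambda} to locate the minimum at $\lambda=-a$); this makes rigorous a step that the paper's proof of Proposition~\ref{thm:tradeoff} handles only tersely.
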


\begin{proof}
  We need to solve the following problem
  \begin{align*}
    \max_{\substack{p_{X,Z}\\\sum_z  p_{X  =  0, Z  =  z}  =
        \sum_y p_{X  = 0,  Y =  y}\\P_{X|Z} \le  P_{X|Y}}} I
    \left( X : Z \right) \le I(X : Y).
  \end{align*}

  Due to Proposition~\ref{thm:tradeoff},  the maximum in the
  l.h.s. is  attained uniquely by $p(a^*,  b^*, \lambda^*)$,
  with  $a^* =  2 \sum_y  p_{X =  0,  Y =  y} -  1$, $b^*  =
  \lambda^* + a^* - 1$, and  $\lambda^* = 2 P_{X|Y} - 1$. By
  setting $p_{X, Y}  = p(a^*, b^*, \lambda^*)$,  we have the
  following.

  The first  condition is  tautologically satisfied  (in the
  end,  it is  just the  requirement of  consistency between
  $p_{X, Y}$ and $p_{X, Z}$).

  From the  third condition we have  $\Tr p_{X, Y} =  P_{X |
    Y}$, which  in turn is  equivalent to $\Tr p_{X,  Y} \ge
  p_{X = 0}$.

  Finally, from  the second  condition we  have $p_{X,  Y} =
  p_{Y = 0} + (1 - p_{X = 0})$, which proves the statement.  
\end{proof}

By Proposition~\ref{thm:characterization}, the set of binary
joint   probability   distributions  $p_{X,Y}$   such   that
Eq.~\eqref{eq:false} holds for  any binary joint probability
distribution $p_{X,Z}$  is a  strict subset of  the boundary
$\delta  \mathcal{P}$ of  the  set  $\mathcal{P}$ of  binary
joint  probability   distributions,  and   hence  it   is  a
zero-measure   set,  in   contrast  with   claims  made   in
Refs.~\cite{Paw10, PB11,  HG12, Paw12} (that however  do not
affect the validity of the results therein).

\subsection{Sufficient conditions for the validity of Eq.~\eqref{eq:false}}

Propositions~\ref{thm:tradeoff}
and~\ref{thm:characterization} show that imposing conditions
on $p_{X,Y}$ only is in  general not enough to guarantee the
validity of  Eq.~\eqref{eq:false}. Hence, it is  relevant to
derive   sufficient   conditions   for   the   validity   of
Eq.~\eqref{eq:false}  in  terms  of  $p_{X,Y}$  \textit{and}
$p_{X,Z}$.   In  particular,  the  properties  discussed  in
Section~\ref{sec:properties} make the study of the case $p_Y
= p_Z$ analytically tractable, so we will restrict to such a
case.

First, we consider  the case where, along with  $p_Y = p_Z$,
we additionally require the uniformity of $p_X$, and we have
the  following  sufficient  condition for  the  validity  of
Eq.~\eqref{eq:false}.

\begin{cor}[Monotonicity]
  \label{cor:uniformx}
  Equation~\eqref{eq:false} holds  for any two  binary joint
  probability  distributions  $p_{X,Y}$ and  $p_{X,Z}$  such
  that $p_X$ is uniform and $p_Y = p_Z$. Moreover
  \begin{align*}
    P_{X|Y} = P_{X|Z} \Longleftrightarrow I(X:Y) = I(X:Z).
    \end{align*}
\end{cor}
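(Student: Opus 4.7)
The plan is to reduce both sides of the implication to statements about a single parameter $\lambda$, by showing that the hypotheses fix the other two parameters of the parameterization of Eq.~\eqref{eq:joint}, and then invoking the convexity/minimum properties established in Lemma~\ref{lmm:info}.

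First I would translate the hypotheses into the $(a, b, \lambda)$ parameterization. By Eq.~\eqref{eq:marginalx}, uniformity of $p_X$ is equivalent to $a_Y = a_Z = 0$, and by Eq.~\eqref{eq:marginaly}, $p_Y = p_Z$ is equivalent to $b_Y = b_Z =: b$. Hence both $p_{X,Y}$ and $p_{X,Z}$ are of the form $p(0, b, \lambda)$, differing only in the value of $\lambda$.

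Next I would reduce the guessing probability side. Setting $a = 0$ in Eq.~\eqref{eq:guessbin} gives $P(0, \lambda) = (1 + |\lambda|)/2$, so the hypothesis $P_{X|Y} \ge P_{X|Z}$ is equivalent to $|\lambda_Y| \ge |\lambda_Z|$.

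Then I would analyze the mutual information as a function of $\lambda$ alone, with $a = 0$ and $b$ fixed. By Property~\ref{item:convlambda} of Lemma~\ref{lmm:info}, $I(0, b, \lambda)$ is convex in $\lambda$; by Property~\ref{item:minlambda}, it attains its unique global minimum at $\lambda = a b = 0$; and by Eq.~\eqref{eq:infosym} specialized to $a=0$, it is even in $\lambda$. A convex, even function with unique minimum at the origin is strictly monotonically increasing in $|\lambda|$. Therefore $|\lambda_Y| \ge |\lambda_Z|$ implies $I(0, b, \lambda_Y) \ge I(0, b, \lambda_Z)$, with equality if and only if $|\lambda_Y| = |\lambda_Z|$, which in turn is equivalent to $P_{X|Y} = P_{X|Z}$.

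There is no real obstacle here: the corollary is essentially a repackaging of Lemma~\ref{lmm:info} under the two simplifying assumptions, and the only subtlety is recording the strict monotonicity in $|\lambda|$ (needed for the equivalence of equalities), which follows directly from the uniqueness of the minimum combined with convexity and evenness.
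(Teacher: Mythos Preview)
Your proposal is correct and follows essentially the same route as the paper's proof: both reduce to the parameterization $p(0,b,\lambda)$, use Eq.~\eqref{eq:guessbin} and Eq.~\eqref{eq:infosym} to obtain evenness in $\lambda$, and conclude from monotonicity in $|\lambda|$. Your version is in fact slightly more explicit, since you spell out that the monotonicity of $I(0,b,\lambda)$ in $|\lambda|$ comes from Properties~\ref{item:convlambda} and~\ref{item:minlambda} of Lemma~\ref{lmm:info}, and you justify the strict monotonicity needed for the equality equivalence, whereas the paper simply asserts that $P$ and $I$ are increasing in $|\lambda|$.
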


\begin{proof}
  Under the assumptions of the  corollary one has $p_{X,Y} =
  p(0, b, \lambda)$  and $p_{X,Z} =: p(0, b,  \mu)$.  Due to
  Eqs.~\eqref{eq:guessbin}    and~\eqref{eq:infosym},    the
  guessing  probability $P_{X|Y}  = P(0,  \lambda)$ and  the
  mutual information  $I(X:Y) =  I(0, b, \lambda)$  are even
  functions of $\lambda$.   Moreover, $P_{X|Y}$ and $I(X:Y)$
  are  increasing   functions  in  $|\lambda|$,   hence  the
  statement is proved.
\end{proof}

Second, we consider the case  where, along with $p_Y = p_Z$,
we additionally  require the  uniformity of $p_Y$  and $p_Z$
and  that $\Tr  p_{X,Y}  \ge p_{X  = 0}$,  and  we have  the
following   sufficient  condition   for   the  validity   of
Eq.~\eqref{eq:false}.

\begin{cor}
  \label{cor:uniformyz}
  Equation~\eqref{eq:false} holds  for any two  binary joint
  probability  distributions  $p_{X,Y}$ and  $p_{X,Z}$  such
  that $p_Y$ and $p_Z$ are uniform and that $\Tr p_{X,Y} \ge
  p_{X = 0}$. Moreover
  \begin{align*}
    P_{X|Y}  =  P_{X|Z}  \ \Longleftrightarrow  \  I(X:Y)  =
    I(X:Z).
  \end{align*}
\end{cor}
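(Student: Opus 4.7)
The plan is to translate all hypotheses into the $(a,b,\lambda)$ parameterization of Eq.~\eqref{eq:joint} and then reduce the statement to a convexity-and-symmetry argument in a single variable, exactly as in the proof of Corollary~\ref{cor:uniformx}. Since $p_{X,Y}$ and $p_{X,Z}$ must share the same marginal $p_X$ for Eq.~\eqref{eq:false} to be nontrivial, the parameter $a$ is common to both distributions, and w.l.o.g.\ we may take $a\ge 0$ by Eqs.~\eqref{eq:guesssym}--\eqref{eq:infosym}. The uniformity of $p_Y$ and of $p_Z$ forces the respective $b$-parameters to vanish (Eq.~\eqref{eq:marginaly}), so we may write $p_{X,Y}=p(a,0,\lambda)$ and $p_{X,Z}=p(a,0,\mu)$ for some $\lambda,\mu$.

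Next, I would rewrite the trace hypothesis. A direct computation from the explicit form of $p_{X,Y}$ gives $\Tr p_{X,Y}=(1+\lambda)/2$, while $p_{X=0}=(1+a)/2$ by Eq.~\eqref{eq:marginalx}; hence $\Tr p_{X,Y}\ge p_{X=0}$ is exactly $\lambda\ge a\ge 0$. Plugging $b=0$ into Eq.~\eqref{eq:guessbin} together with this inequality gives $P_{X|Y}=(1+\lambda)/2$, so the hypothesis $P_{X|Y}\ge P_{X|Z}$ of Eq.~\eqref{eq:false} becomes $\lambda\ge \max(a,|\mu|)$, which (since we already have $\lambda\ge a$) is equivalent to $|\mu|\le\lambda$.

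It then remains to prove $I(a,0,\mu)\le I(a,0,\lambda)$ whenever $|\mu|\le\lambda$. This is immediate from Lemma~\ref{lmm:info}: Property~\ref{item:convlambda} gives convexity of $I(a,0,\cdot)$ in its last argument, Property~\ref{item:minlambda} (with $b=0$, so that $ab=0$) locates the global minimum at $\mu=0$, and the third equality in Eq.~\eqref{eq:infosym} specialized to $b=0$ gives $I(a,0,\mu)=I(a,0,-\mu)$, so $I(a,0,\cdot)$ is even. A convex even function on $[-\lambda,\lambda]$ attains its maximum at the endpoints, yielding the desired inequality $I(X:Z)\le I(X:Y)$.

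For the moreover part, the $\Leftarrow$ direction is immediate from evenness: if $|\mu|=\lambda$ then $I(a,0,\mu)=I(a,0,\lambda)$ and $P_{X|Z}=P_{X|Y}$. For $\Rightarrow$, the second derivative formula in the proof of Property~\ref{item:convlambda} is a sum of strictly positive reciprocals of joint probabilities whenever $p_{X,Y}$ lies in the bulk, so $I(a,0,\cdot)$ is strictly convex on the interior of its domain and, together with evenness, strictly increasing in $|\mu|$; thus $I(X:Y)=I(X:Z)$ with $|\mu|\le\lambda$ forces $|\mu|=\lambda$, i.e.\ $P_{X|Z}=P_{X|Y}$. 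The only mild obstacle is handling boundary cases where some entry of $p_{X,Y}$ or $p_{X,Z}$ vanishes (so that strict convexity degenerates); these can be dispatched directly by inspecting the formula for $I$ at such points, or by noting that the stated hypothesis $\Tr p_{X,Y}\ge p_{X=0}$ together with $\lambda\le 1$ already keeps us in the regime where the convexity argument applies.
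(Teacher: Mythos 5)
Your write-up follows the paper's own proof essentially verbatim: the paper's (three-line) argument is exactly this reduction to $p_{X,Y}=p(a,0,\lambda)$ and $p_{X,Z}=p(a,0,\mu)$ with $\lambda\ge a$ from the trace hypothesis and $|\mu|\le\lambda$ from $P_{X|Y}\ge P_{X|Z}$, with the convexity/evenness step via Lemma~\ref{lmm:info} and Eq.~\eqref{eq:infosym} left implicit, so your version is a correct and more detailed rendering of the same route (you are in fact more careful in writing $|\mu|\le\lambda$ where the paper writes $\mu\le\lambda$).

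Two caveats, both of which the paper's own proof glosses over as well. First, the reduction to $a\ge 0$ is not actually licensed by Eqs.~\eqref{eq:guesssym}--\eqref{eq:infosym}: relabelling $X$ maps $(a,0,\lambda)\mapsto(-a,0,-\lambda)$ and turns the hypothesis $\Tr p_{X,Y}\ge p_{X=0}$ (i.e.\ $\lambda\ge a$) into $\lambda\le a$, so the hypothesis class is not preserved; what really does the work is the standing convention $p_{X=0}\ge p_{X=1}$ inherited from Proposition~\ref{thm:tradeoff}, and without it the statement itself fails (e.g.\ $a=-0.5$, $\lambda=-0.3$, $\mu=0.5$ satisfies all hypotheses, gives $P_{X|Y}=P_{X|Z}=0.75$, yet $I(X:Z)>I(X:Y)$). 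Second, in the ``moreover'' part, $P_{X|Y}=P_{X|Z}$ only yields $\max(a,|\mu|)=\lambda$, which forces $|\mu|=\lambda$ only when $\lambda>a$; at the boundary $\lambda=a$ (i.e.\ $\Tr p_{X,Y}=p_{X=0}$) the forward implication fails (take $a=\lambda=1/2$, $\mu=0$: both guessing probabilities equal $3/4$ but $I(X:Z)=0\neq I(X:Y)$), so your ``immediate from evenness'' step silently assumes strict inequality. These are defects shared with (indeed, hidden in) the corollary and the paper's proof rather than flaws peculiar to your strategy, but they should be stated explicitly if you keep this argument.
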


\begin{proof}
  Under the assumptions of the  corollary one has $p_{X,Y} =
  p(a,  0, \lambda)$  and  $p_{X,Z} =:  p(a,  0, \mu)$  with
  $\lambda   \ge  a$.    The  condition   on  the   guessing
  probabilities   $P_{X|Y} \ge P_{X|Z}$  implies   also
  $\lambda \ge \mu$. Hence the statement follows.
\end{proof}

Along  with  the  restriction  $p_Y =  p_Z$,  to  prove  the
validity of  Eq.~\eqref{eq:false} for any $p_{X,Z}$  we have
additionally   assumed   the    uniformity   of   $p_X$   in
Corollary~\ref{cor:uniformx} or the  uniformity of $p_Y$ and
$p_Z$   in  Corollary~\ref{cor:uniformyz}.    The  following
counterexample shows that the restriction  $p_Y = p_Z$ is by
itself insufficient for the validity of Eq.~\eqref{eq:false}
for any $p_{X,Z}$.

Consider   the  following   two  binary   joint  probability
distributions $p_{X,Y}$ and $p_{X,Z}$
\begin{align*}
  p_{X,Y} = \begin{pmatrix}
    0.48 & 0.11 \\
    0.11 & 0.30
  \end{pmatrix}, \qquad
   p_{X,Z} = \begin{pmatrix}
    0.21 & 0.38 \\
    0.38 & 0.03
  \end{pmatrix},
\end{align*}
for which $p_Y = p_Z$. By direct inspection one has
\begin{align*}
  I(X : Y)  < 0.23, \qquad &      I(X : Z) > 0.26,
\end{align*}
where logarithms are taken to base $2$, whereas
\begin{align*}
  P_{X|Y} = 0.78, \qquad P_{X|Z} = 0.76,
\end{align*}
which disproves the validity of Eq.~\eqref{eq:false}.

\section{Measurements attaining the extremal points of Lorenz curves} 
\label{sect:quantum}

\subsection{An explicit parameterization for Lorenz curves}

Extremal POVMs  are relevant  because, for instance,  in the
maximization of  convex functions  one can  restrict without
loss of  generality to extremal  POVMs. The set  of extremal
POVMs has been characterized~\cite{Par99, DLP05}.

Such a  concept of extremality of  a POVM within the  set of
POVMs is state-independent. However,  if a particular family
of  states is  given and  the task  if to  optimize a  given
payoff function  over POVMs, one  does not need  to consider
all extremal  POVMs, but only  the subset of  extremal POVMs
that generate extremal conditional probability distributions
over such a family of states, according to the Born rule. We
have, therefore, a concept of  extremality for POVMs that is
state-dependent. Such POVMs generate  extremal points of the
testing region, whose boundary is given by the Lorenz curve,
of the given family of states.

Let us  consider the simplest  non-trivial case, that  is, a
family  of  two states  (a  dichotomy)  $(\rho, \sigma)$  is
given, and  the task  is to find  the (of  course, extremal)
two-outcomes  POVMs   that  generate   extremal  conditional
probability  distributions over  such  a  dichotomy. Such  a
problem is  equivalent to finding the  (of course, extremal)
effects that generate extremal  points in the testing region
of such a dichotomy.

Let us assume,  without loss of generality,  that $\rho \neq
\sigma$, otherwise  the problem trivializes. To  begin with,
we need to derive an explicit expression for an operator
\begin{align}
  \label{eq:omega}
  \omega =  \mu \rho + \left(1  - \mu\right) \sigma
\end{align}
given by the affine combination of $\rho$ and $\sigma$, that
is  orthogonal, in  the  Hilbert-Schmidt sense,  to $\rho  -
\sigma$.  Solving the linear equation
\begin{align*}
  \Tr \left[ \left(  \mu \rho + \left(1  - \mu\right) \sigma
    \right) \left( \rho - \sigma \right) \right] = 0,
\end{align*}
immediately gives
\begin{align}
  \label{eq:mu}
  \mu  = \frac{\Tr  \sigma^2 -  \Tr \rho  \sigma}{\Tr \left(
    \rho - \sigma \right)^2},
\end{align}
which is well-defined due to $\rho \neq \sigma$.

Notice   that   the   linear   independence   of   operators
$\tilde\omega := \omega - \openone/d$ and $\rho - \sigma$ is
equivalent to the linear independence of $\tilde\rho := \rho
- \openone/d$  and  $\tilde\sigma  := \sigma  -  \openone/d$
($\openone$  and $d$  denote the  identity operator  and the
dimension  of  the  Hilbert space,  respectively).   Indeed,
$\tilde\omega$ cannot be  the null vector because  it can be
written  as  the  affine  combination  (that  is,  a  linear
combination  whose coefficients  can  not be  simultaneously
null) of $\tilde\rho$ and $\tilde\sigma$ as follows
\begin{align*}
  \tilde\omega =  \mu \tilde\rho  + \left(  1 -  \mu \right)
  \tilde\sigma.
\end{align*}
Hence,  the  orthogonality  of $\tilde\omega$  and  $\rho  -
\sigma$   (also   not    null)   guarantees   their   linear
independence.      Conversely,    if     $\tilde\rho$    and
$\tilde\sigma$   are    linearly   dependent,    then   also
$\tilde\omega$ and  $\rho -  \sigma$ are  linearly dependent
since the latter are,  by definition, linear combinations of
the former.

\begin{lmm}
  For  any given  dichotomy $\rho  \neq \sigma$,  the affine
  combination $\omega$ as given by Eqs.~\eqref{eq:omega} and
  ~\eqref{eq:mu} has purity given by
  \begin{align}
    \label{eq:omega_purity}
    \Tr \omega^2  = \frac{\Tr  \rho^2 \Tr \sigma^2  - \left(
      \Tr \rho  \sigma \right)^2}{\Tr  \left( \rho  - \sigma
      \right)^2} < 1.
  \end{align}
\end{lmm}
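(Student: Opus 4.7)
The plan is to establish Eq.~\eqref{eq:omega_purity} in two stages: a direct calculation for the closed-form expression, and a strict-convexity argument for the upper bound by $1$. The structural point is that, by its defining orthogonality to $\rho - \sigma$, the operator $\omega$ is the unique Hilbert--Schmidt minimizer among all affine combinations of $\rho$ and $\sigma$, so the inequality $\Tr \omega^2 \le 1$ is essentially immediate, leaving only strictness as the substantive step.

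For the identity, I would introduce the shorthands $a := \Tr \rho^2$, $b := \Tr \sigma^2$, $c := \Tr \rho \sigma$, so that Eq.~\eqref{eq:mu} reads $\mu = (b - c)/(a + b - 2c)$ and $1 - \mu = (a - c)/(a + b - 2c)$. Writing $\Tr \omega^2 = \mu \Tr(\rho \omega) + (1 - \mu) \Tr(\sigma \omega)$ and expanding yields
\[
\Tr \omega^2 = \mu^2 a + 2 \mu (1 - \mu) c + (1 - \mu)^2 b.
\]
Substituting the expressions for $\mu$ and $1 - \mu$ and collecting over the common denominator $(a + b - 2c)^2$, the numerator factors after routine manipulation as $(ab - c^2)(a + b - 2c)$; canceling one factor of $a + b - 2c > 0$ (nonzero since $\rho \neq \sigma$) gives exactly $\Tr \omega^2 = (ab - c^2)/(a + b - 2c)$, as claimed.

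For the bound, I would consider the auxiliary function $f(t) := \Tr\bigl[(t \rho + (1 - t) \sigma)^2\bigr]$. This is a strictly convex quadratic in $t$ with leading coefficient $\Tr(\rho - \sigma)^2 > 0$, and its derivative $f'(t) = 2 \Tr\bigl[(t \rho + (1 - t) \sigma)(\rho - \sigma)\bigr]$ vanishes precisely at $t = \mu$ by Eq.~\eqref{eq:omega}. Hence $\omega$ is the unique minimizer of $f$, and in particular $\Tr \omega^2 \le \min(\Tr \rho^2, \Tr \sigma^2) \le 1$.

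The main, though mild, obstacle is to upgrade this to strict inequality. If $\min(\Tr \rho^2, \Tr \sigma^2) < 1$ the strict bound follows at once. Otherwise $\Tr \rho^2 = \Tr \sigma^2 = 1$, so both $\rho$ and $\sigma$ are pure; substituting $a = b = 1$ into the closed form already proved reduces $\Tr \omega^2$ to $(1 + c)/2$, and the Cauchy--Schwarz inequality $c^2 \le ab = 1$ is strict because the equality case $c = 1$ would force $\rho \propto \sigma$, hence $\rho = \sigma$ by unit trace, contradicting the hypothesis. This settles $\Tr \omega^2 < 1$ in every case.
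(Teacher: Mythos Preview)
Your proof is correct. For the closed-form identity both you and the paper proceed by direct expansion of $\Tr\omega^2$; your use of the shorthands $a,b,c$ and the factorization of the numerator as $(ab-c^2)(a+b-2c)$ is just a tidier version of the paper's longer computation.

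For the strict inequality, however, your route is genuinely different. The paper rewrites $\Tr\omega^2<1$ as $ab-a-b+1<(1-c)^2$ and then sandwiches via Cauchy--Schwarz, inserting the intermediate quantity $(1-\sqrt{a}\sqrt{b})^2$ and checking each side by elementary inequalities. Your argument instead exploits the geometric meaning of $\omega$: since $f(t)=\Tr[(t\rho+(1-t)\sigma)^2]$ is a strictly convex quadratic whose stationary point is exactly $t=\mu$ (this is the orthogonality condition defining $\mu$), $\omega$ is the global Hilbert--Schmidt minimizer along the affine line, whence $\Tr\omega^2\le\min(\Tr\rho^2,\Tr\sigma^2)\le1$, with Cauchy--Schwarz invoked only to rule out the residual pure--pure case. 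This is more conceptual and makes the role of the orthogonality condition transparent; the paper's approach, by contrast, is a uniform algebraic chain that does not require a case split. Both are complete; yours additionally identifies $\omega$ as the foot of the perpendicular from the origin onto the affine line through $\rho$ and $\sigma$ in Hilbert--Schmidt geometry, which is worth stating explicitly.
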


\begin{proof}
  From    the    definition    of    $\omega$    given    by
  Eq. \eqref{eq:omega}, by replacing the definition of $\mu$
  given by Eq.~\eqref{eq:mu}, one has
  \begin{align*}
    \omega  = &  \frac{\Tr \sigma^2  - \Tr  \rho \sigma}{\Tr
      \left(  \rho  - \sigma  \right)^2}  \rho  + \left(1  -
    \frac{\Tr \sigma^2 - \Tr  \rho \sigma}{\Tr \left( \rho -
      \sigma  \right)^2} \right)  \sigma\\ =  & \frac{\left(
      \Tr \sigma^2 -  \Tr \rho \sigma \right)  \rho + \left(
      \Tr  \rho^2 -  \Tr  \rho \sigma  \right) \sigma}  {\Tr
      \left( \rho - \sigma \right)^2} .
  \end{align*}
  The  purity of  the  affine combination  $\omega$ is  then
  given by
  \begin{align*}
    \Tr \omega^2 & = \Tr  \left[ \frac{\left( \Tr \sigma^2 -
        \Tr \rho \sigma  \right) \rho + \left(  \Tr \rho^2 -
        \Tr \rho  \sigma \right) \sigma} {\Tr  \left( \rho -
        \sigma \right)^2} \right]^2 \\ & = \frac{ \left( \Tr
      \sigma^2  - \Tr  \rho  \sigma \right)^2  \Tr \rho^2  +
      \left(  \Tr \rho^2  -  \Tr \rho  \sigma \right)^2  \Tr
      \sigma^2} {\left  [\Tr \left( \rho -  \sigma \right)^2
        \right]^2} \\ & + \frac{2  \left( \Tr \sigma^2 - \Tr
      \rho  \sigma  \right) \left(  \Tr  \rho^2  - \Tr  \rho
      \sigma  \right) \Tr  \rho \sigma}  {\left[ \Tr  \left(
        \rho - \sigma \right)^2 \right]^2} . \\
  \end{align*}
  The numerator of the equation above simplifies to
  \begin{align*}
    & \left[ \left( \Tr \sigma^2  \right)^2 - 2 \Tr \sigma^2
      \Tr  \rho \sigma  + \left(  \Tr \rho  \sigma \right)^2
      \right] \Tr  \rho^2 \\  & +  \left[ \left(  \Tr \rho^2
      \right)^2 - 2 \Tr \rho^2  \Tr \rho \sigma + \left( \Tr
      \rho \sigma  \right)^2 \right] \Tr  \sigma^2 \\ &  + 2
    \left[ \Tr \sigma^2  \Tr \rho^2 - \left(  \Tr \sigma^2 +
      \Tr \rho^2 \right)  \Tr \rho \sigma +  \left( \Tr \rho
      \sigma \right)^2  \right] \Tr \rho  \sigma \\ =  & \Tr
    \rho^2 \Tr  \sigma^2 \left(  \Tr \sigma^2  - 2  \Tr \rho
    \sigma  + \Tr  \rho^2 \right)  \\  & -  \left( \Tr  \rho
    \sigma \right)^2 \left( \Tr \sigma^2 - 2 \Tr \rho \sigma
    +  \Tr \rho^2  \right)  \\  = &  \left[  \Tr \rho^2  \Tr
      \sigma^2 -  \left( \Tr  \rho \sigma  \right)^2 \right]
    \Tr \left( \rho - \sigma \right)^2.
  \end{align*}
  hence the equality in Eq.~\eqref{eq:omega_purity}.

  To prove the inequality in Eq.~\eqref{eq:omega_purity}, we
  multiplying both  its sides by  $\Tr \left( \rho  - \sigma
  \right)^2$, thus obtaining
  \begin{align*}
    \Tr  \rho^2  \Tr  \sigma^2  -  \left(  \Tr  \rho  \sigma
    \right)^2 < \Tr \left( \rho - \sigma \right)^2,
  \end{align*}
  or equivalently
  \begin{align*}
    \Tr \rho^2 \Tr \sigma^2 - \Tr  \rho^2 - \Tr \sigma^2 + 1
    < \left( 1 - \Tr \rho \sigma \right)^2.
  \end{align*}
  One has  
  \begin{align*}
    0 \leq 1  - \sqrt{\Tr \rho^2} \sqrt{\Tr \sigma^2}  < 1 -
    \Tr \rho \sigma.
  \end{align*}
  where the first inequality  holds because $\Tr \rho^2, \Tr
  \sigma^2 \leq 1$, and the second (strict) inequality holds
  because the Cauchy-Schwarz holds  strictly for (non equal)
  states, that is
  \begin{align*}
    \left(  \Tr  \rho  \sigma  \right)^2 <  \Tr  \rho^2  \Tr
    \sigma^2.
  \end{align*}

  By squaring both sides, this inequality is equivalent to
  \begin{align*}
    \left(  1   -  \sqrt{\Tr  \rho^2}   \sqrt{\Tr  \sigma^2}
    \right)^2 \leq \left( 1 - \Tr \rho \sigma \right)^2.
  \end{align*}
  We can easily prove that
  \begin{align*}
    \Tr \rho^2 \Tr \sigma^2 - \Tr  \rho^2 - \Tr \sigma^2 + 1
    \leq  \left(1 -  \sqrt{\Tr  \rho^2} \sqrt{\Tr  \sigma^2}
    \right)^2,
  \end{align*}
  or equivalently
  \begin{align*}
    \Tr  \rho^2  +  \Tr   \sigma^2  -  2  \sqrt{\Tr  \rho^2}
    \sqrt{\Tr \sigma^2} \geq 0.
  \end{align*}
  This holds true because 
  \begin{align*}
    \left(\sqrt{\Tr \rho^2} - \sqrt{\Tr \sigma^2} \right)^2 \geq 0.
  \end{align*}
  Hence, 
  \begin{align*}
    \Tr \rho^2 \Tr \sigma^2 - \Tr  \rho^2 - \Tr \sigma^2 + 1
    < \left( 1 - \Tr \rho \sigma \right)^2
  \end{align*}
  which completes the proof.
\end{proof}

Hence, we  can introduce  a convenient  parameterization for
Helstrom's matrices as follows
\begin{align}
  \label{eq:helstrom}
  H  \left( \lambda  \right)  = \lambda  \left(  \mu \rho  +
  \left(  1 -  \mu \right)  \sigma \right)  - \left(  \rho -
  \sigma \right),
\end{align}
where $\lambda \in  \mathbb{R}$. Two-outcomes POVMs generate
an   extremal  conditional   probability  distribution   for
dichotomy  $(\rho, \sigma)$  if  and only  if their  effects
correspond to  the projectors  on the positive  and negative
parts of $H ( \lambda )$, for some $\lambda$.

The  minimum  of  the  mutual information  is  attained  for
$\lambda =  \pm\infty$ in Eq.~\eqref{eq:helstrom},  since in
this  case  and  only  in  this case,  one  has  the  mutual
independence of the input and output random variables.

\subsection{Measurements  attaining the  extremal points  of
  the testing region of a qubit dichotomy}
  
Notice  that   for  any   given  qubit   dichotomy  ($\rho$,
$\sigma$),  the  affine  combination $\omega$  as  given  by
Eq.~\eqref{eq:omega}  and  ~\eqref{eq:mu}  is  a  (non-pure)
state.

Rather than  optimizing the accessible information  over the
set of all von Neumann  measurements, it clearly suffices to
only consider measurements that  generate extremal points on
the testing  region.  The  following proposition  provides a
closed-form characterization of such measurements.

\begin{thm}
  For  any  given  qubit  dichotomy  $(\rho,  \sigma)$,  the
  maximum  over  von  Neumann  measurements  of  any  convex
  objective  function (such  as the  mutual information)  is
  attained  for  $\lambda$ in  Eq.~\eqref{eq:helstrom}  such
  that $-\lambda^* \le \lambda \le \lambda^*$, where
  \begin{align}
    \label{eq:lambda}
    \lambda^* =  \frac{\Tr \left( \rho -  \sigma \right)^2}{
      \sqrt{\Tr \left( \rho -  \sigma \right)^2 - \Tr \rho^2
        \Tr \sigma^2 + \left( \Tr \rho \sigma \right)^2}}.
  \end{align}
\end{thm}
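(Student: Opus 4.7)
My plan is to reduce the claim to a sign analysis of $\det H(\lambda)$. Since $H(\lambda)$ is a Hermitian $2\times 2$ matrix, its positive and negative spectral projectors yield a genuine (rank-one) projective measurement if and only if its two eigenvalues have opposite signs, equivalently $\det H(\lambda)<0$. I will show $\det H(\lambda)=0$ precisely at $\lambda=\pm\lambda^*$, with $\lambda^*$ given by the displayed expression, so that outside $[-\lambda^*,\lambda^*]$ one has $\det H(\lambda)>0$ and, combined with $\Tr H(\lambda)=\lambda\neq 0$, the operator $H(\lambda)$ is sign-definite. The corresponding spectral projector is then either $\openone$ or $0$, giving a trivial POVM that produces a constant output; any convex objective such as the mutual information vanishes there, so the optimum is not attained by such a measurement.

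The main computation uses the $2\times 2$ identity $\det H = \bigl((\Tr H)^2 - \Tr H^2\bigr)/2$. Because $\omega$ is a state and $\rho-\sigma$ is traceless,
\[
\Tr H(\lambda)=\lambda\Tr\omega-\Tr(\rho-\sigma)=\lambda.
\]
Expanding $H(\lambda)^2$ and using the Hilbert--Schmidt orthogonality $\Tr[\omega(\rho-\sigma)]=0$ built into the definition of $\omega$ in Eq.~\eqref{eq:omega}, the cross term vanishes and
\[
\Tr H(\lambda)^2=\lambda^2\Tr\omega^2+\Tr(\rho-\sigma)^2,
\]
hence
\[
\det H(\lambda)=\frac{\lambda^2(1-\Tr\omega^2)-\Tr(\rho-\sigma)^2}{2}.
\]
Setting this expression to zero and solving for $\lambda^2$ gives $\lambda^{*2}=\Tr(\rho-\sigma)^2/(1-\Tr\omega^2)$. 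Substituting the closed-form $\Tr\omega^2$ from Eq.~\eqref{eq:omega_purity} yields
\[
1-\Tr\omega^2=\frac{\Tr(\rho-\sigma)^2-\Tr\rho^2\Tr\sigma^2+(\Tr\rho\sigma)^2}{\Tr(\rho-\sigma)^2},
\]
and taking the positive square root reproduces Eq.~\eqref{eq:lambda}; the denominator is strictly positive by the strict inequality in Eq.~\eqref{eq:omega_purity}, so $\lambda^*$ is finite and real.

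The main obstacle is not any individual algebraic step but the need to verify that nothing is lost at the boundary of the non-trivial range. For $|\lambda|>\lambda^*$ both eigenvalues of $H(\lambda)$ share the sign of $\Tr H(\lambda)=\lambda$, so the POVM collapses to $(\openone,0)$ or $(0,\openone)$, giving the corner points $(1,1)$ or $(0,0)$ of the testing region. These trivial measurements correspond to the endpoints of the Lorenz curve, which are independent of the specific value of $\lambda$ beyond the threshold; by convexity the optimum of any convex objective on the testing region is attained at an extremal point of the Lorenz curve, so the restriction to $\lambda\in[-\lambda^*,\lambda^*]$ incurs no loss, establishing the claim.
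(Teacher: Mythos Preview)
Your argument is correct and follows essentially the same route as the paper: both compute $\Tr H(\lambda)=\lambda$ and $\Tr H(\lambda)^2=\lambda^2\Tr\omega^2+\Tr(\rho-\sigma)^2$ via the Hilbert--Schmidt orthogonality of $\omega$ and $\rho-\sigma$, invoke the $2\times2$ identity $-2\det H=\Tr H^2-(\Tr H)^2$, solve $\det H(\lambda)=0$ for $\lambda^2$, substitute Eq.~\eqref{eq:omega_purity}, and use the strict inequality $\Tr\omega^2<1$ to guarantee positivity of the denominator. Your final paragraph is in fact more explicit than the paper's about why $|\lambda|>\lambda^*$ yields only the trivial corners of the testing region and hence cannot improve a convex objective.
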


\begin{proof}
  Measurements attaining the extremal  points of the testing
  region  are those  for  which $\lambda_-  \le \lambda  \le
  \lambda_+$ in Eq.~\eqref{eq:helstrom}, where $\lambda_\pm$
  are  the solutions  of the  quadratic equation  $\det H  (
  \lambda ) = 0$. By explicit computation one has
  \begin{align*}
    -2 \det H \left( \lambda  \right) = \Tr H \left( \lambda
    \right)^2  -   \left(  \Tr  H  \left(   \lambda  \right)
    \right)^2.
  \end{align*}
  By explicit computation one immediately has
  \begin{align*}
    \Tr{H \left( \lambda \right)} = \lambda.
  \end{align*}
  To  compute  $\Tr   H  (  \lambda  )^2$,   we  proceed  as
  follows.  From the  definition  of  $H(\lambda)$ given  by
  Eq.~\eqref{eq:helstrom},  due to  the orthogonality  (a la
  Hilbert-Schmidt) of  $\omega = \mu  \rho + \left( 1  - \mu
  \right) \sigma$ and $\rho - \sigma$, one has
  \begin{align*}
    \Tr H  \left( \lambda \right)^2  & = \Tr  \left[ \lambda
      \omega - \left( \rho - \sigma \right) \right]^2 \\ & =
    \lambda^2  \Tr  \omega^2  +  \Tr \left(  \rho  -  \sigma
    \right)^2.
  \end{align*}
  By replacing the  definition of $\Tr \omega^2$  given  by
  Eq.~\eqref{eq:omega_purity}, one has
  \begin{align*}
    \Tr H  \left( \lambda \right)^2 =  & \lambda^2 \frac{\Tr
      \rho^2  \Tr   \sigma^2  -   \left(  \Tr   \rho  \sigma
      \right)^2}{\Tr \left(  \rho - \sigma \right)^2}  + \Tr
    \left( \rho - \sigma \right)^2.
  \end{align*}
  Hence, the condition $\det H ( \lambda ) = 0$ is equivalent to
  \begin{align*}
    & \lambda^2 \left[  1 - \frac{\Tr \rho^2  \Tr \sigma^2 -
        \left( \Tr \rho \sigma  \right)^2}{\Tr \left( \rho -
        \sigma \right)^2} \right] = \Tr \left( \rho - \sigma
    \right)^2,
  \end{align*}
  or equivalently
  \begin{align*}
    \lambda^2 \frac{\Tr \left( \rho - \sigma \right)^2 - \Tr
      \rho^2  \Tr   \sigma^2  +   \left(  \Tr   \rho  \sigma
      \right)^2}{\Tr \left(  \rho - \sigma \right)^2}  = \Tr
    \left( \rho - \sigma \right)^2,
  \end{align*}
  from which
  \begin{align*}
    \lambda^2  =  \frac{\left(  \Tr  \left(  \rho  -  \sigma
      \right)^2  \right)^2}{   \Tr  \left(  \rho   -  \sigma
      \right)^2 - \Tr \rho^2 \Tr  \sigma^2 + \left( \Tr \rho
      \sigma \right)^2},
  \end{align*}
  from  which Eq.~\eqref{eq:lambda}  immediately follows  if
  the denominator  is strictly  positive.  To show  that the
  denominator   is   strictly   positive,  we   proceed   as
  follows. By explicit computation one has
  \begin{align*}
    \Tr  \left( \rho  - \sigma\right)^2  \Tr \omega^2  = \Tr
    \rho^2 \Tr \sigma^2 - \left( \Tr \rho \sigma \right)^2.
  \end{align*}
  We  know that  $\omega$ is  a non-pure  state (i.e.   $\Tr
  \omega^2 < 1$) from Eq.~\eqref{eq:omega_purity}, hence one
  has
  \begin{align*}
    \Tr  \left(  \rho -  \sigma\right)^2  >  \Tr \rho^2  \Tr
    \sigma^2 - \left( \Tr \rho \sigma \right)^2,
  \end{align*}
  which proves  the strict positivity of  the denominator in
  the   expression   for   $\lambda^2$   above   and   hence
  Eq.~\eqref{eq:lambda}.
\end{proof}

The value  $\lambda_H$ of  $\lambda$ for  which $H(\lambda)$
equals the Helstrom matrix $p_0  \rho - p_1 \sigma$, for any
probabilities $p_0, p_1 \ge 0$ ($p_0 + p_1 = 1$) is given by
\begin{align*}
  \lambda_H = \frac{p_1 - p_0}{p_0 - \mu p_0 + p_1 \mu}.
\end{align*}

To see  this, we equate,  up to  a constant factor  $k$, the
projection on the null-trace  subspace of the $H(\lambda_H)$
matrix and of the Helstrom matrix, as follows
\begin{align*}
  &  \lambda_H \left[  \mu  \left(  \rho -  \frac{\openone}2
    \right)  +  \left(1  -   \mu  \right)  \left(  \sigma  -
    \frac{\openone}2    \right)    \right]    -    \rho    +
  \frac{\openone}2  +  \sigma  - \frac{\openone}2\\  =  &  k
  \left[ p_0  \left( \rho  - \frac{\openone}2 \right)  - p_1
    \left( \sigma - \frac{\openone}2 \right) \right],
\end{align*}
from which
\begin{align*}
  & \left( \lambda_H  \mu - 1 - k p_0  \right) \left( \rho -
  \frac{\openone}2  \right)  \\  &   +  \left(  \lambda_H  -
  \lambda_H  \mu  +  1  +  k p_1  \right)  \left(  \sigma  -
  \frac{\openone}2 \right) = 0.
\end{align*}
Due to  the linear independence  of $\rho -  \openone/2$ and
$\sigma - \openone/2$ one has
\begin{align*}
  \begin{cases}
    \lambda_H \mu -  1 - k p_0 = 0,\\  \lambda_H - \lambda_H
    \mu + 1 + k p_1 = 0,
  \end{cases}
\end{align*}
from which
\begin{align*}
  k = \frac{\lambda_H \mu - 1}{p_0}.
\end{align*}
and hence
\begin{align*}
  \lambda_H  -  \mu  \lambda_H  + 1  +  \frac{p_1}{p_0}  \mu
  \lambda_H - \frac{p_1}{p_0} = 0,
\end{align*}
thus the statement.

Notice that~\cite{Lev95} $\lambda_H$  attains the accessible
information if the dichotomy  comprises pure states, for any
probabilities.

\subsection{Keil's conjecture}

To  showcase  the  convenience of  the  parameterization  in
Eq.~\eqref{eq:helstrom}, let  us show  that a  long standing
conjecture by Keil on the  accessible information of a qubit
dichotomy, once  reframed with  such a  parameterization, is
equivalent   to  the   quasi-convexity  of   the  accessible
information.   Keil  made   the  following  conjecture  (see
Conjecture~(2),  page  77  of Ref.~\cite{Kei09}),  based  on
numerical evidence:
\begin{quotation}
  For  two  states  of  a   qubit,  there  exists  only  two
  stationary  points of  the mutual  information if  the the
  number of  outcomes of  the measurements is  restricted to
  two and  both lie in the  same plane as the  states in the
  Bloch representation.  One of the stationary points is the
  global minimum and the other one is the global maximum.
\end{quotation}

A function is  quasi-concave if and only if for  any $x$ and
$y$ for any $0 \le \alpha \le 1$ one has
\begin{align*}
  f \left(  \left( 1 - \alpha  \right) x + \alpha  y \right)
  \ge \min  \left( f  \left( x \right),  f \left(  y \right)
  \right) .
\end{align*}
Hence,  Keil's  conjecture above  can  be  rewritten as  the
following  quasi-concave   formulation  of   the  accessible
information problem.
\begin{conj}
  \label{conj:quasiconcavity}
  The mutual  information of  any given qubit  dichotomy and
  any von  Neumann measurement given by  the eigenvectors of
  Eq.~\eqref{eq:helstrom} for  any $-\infty \le  \lambda \le
  +\infty$ is a quasi-concave function of $\lambda$.
\end{conj}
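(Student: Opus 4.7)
Plan. My approach is to work in the Bloch-ball picture, reducing the conjecture to a count of critical points of the mutual information as a function of the measurement direction in the Bloch plane that contains the dichotomy. Since any qubit dichotomy $(\rho, \sigma)$ lies in a two-dimensional affine plane of the Bloch ball (the plane containing $\tilde\rho = \rho - \openone/2$ and $\tilde\sigma = \sigma - \openone/2$), and since the Helstrom parametrization in Eq.~\eqref{eq:helstrom} forces the measurement Bloch vectors to lie in this same plane, I would parametrize the measurement by the angle $\theta$ of its Bloch direction. The map $\lambda \mapsto \theta(\lambda)$ is a smooth, strictly monotone diffeomorphism between $\lambda \in [-\infty, +\infty]$ (both endpoints corresponding to the eigenbasis of $\omega$) and the circle $\mathbb{R}/\pi\mathbb{Z}$; hence quasi-concavity of $I$ in $\lambda$ on $\mathbb{R}$ is equivalent to $I(\theta)$ having a unique local maximum on this circle.

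Next I would write the joint distribution in the Section~\ref{sect:classical} parametrization as $p(a, b(\theta), \eta(\theta))$, where $a = 2p_{X=0} - 1$ is fixed by the priors, $\eta$ renames the joint-distribution parameter to avoid a clash with the Helstrom $\lambda$, and $(b(\theta), \eta(\theta))$ traces a closed curve $\mathcal{C}$ in the $(b,\eta)$-plane. A direct calculation using $q_x(\theta) := p(Y=0 \mid X=x)$, which is sinusoidal in $\theta$, yields
\begin{align*}
\frac{dI}{d\theta} = \sum_{x} p_{X=x}\, \frac{dq_x}{d\theta}\, \log \frac{q_x(\theta)\,(1-p_Y(\theta))}{(1-q_x(\theta))\,p_Y(\theta)}.
\end{align*}
Two critical points are immediate from the construction: the measurement along the $\omega$ direction (attaining $I = 0$ at $\lambda = \pm\infty$, hence a global minimum) and its orthogonal conjugate in the plane, which I expect to realize the maximum. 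The conjecture then reduces to showing there are no other critical points.

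To bound the number of critical points, I would exploit the convexity of $I$ in each of $b$ and $\eta$ separately (Lemma~\ref{lmm:info}, properties~\ref{item:convlambda} and~\ref{item:convb}) together with the explicit geometry of $\mathcal{C}$, which by the Born rule turns out to be a conic in the $(b,\eta)$-plane. Analyzing where $\nabla I$ is tangent to $\mathcal{C}$ yields a critical-point system whose solutions should match the two candidates above.

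The hardest step is this last one: convexity of $I$ in each coordinate does not by itself control its behaviour along a curve, so the argument has to combine the conic geometry of $\mathcal{C}$ with the logarithmic form of $dI/d\theta$. I expect the critical-point condition, after clearing logarithms, to reduce to counting zeros of a trigonometric polynomial whose degree can be bounded via the sinusoidal nature of $q_x(\theta)$. A plausible fallback, should the direct count prove intractable, is to establish strict log-concavity of $I$ locally around the maximum and extend globally by a continuity-in-purity homotopy from the pure-state case, where Levitin's theorem~\cite{Lev95} provides a base case with a single maximum.
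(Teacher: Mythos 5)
There is a fundamental mismatch here: the statement you are trying to prove is explicitly a \emph{conjecture}, not a theorem of the paper. It is the authors' reformulation of Keil's conjecture, which rests only on numerical evidence and has remained open since 2009; the paper offers no proof, and indeed notes in the introduction that even Keil's (proved) result on Shor's conjecture for qubits is non-constructive precisely because the stationarity condition for the mutual information is a \emph{transcendental} equation. Your proposal does not close this gap. You correctly reduce quasi-concavity in $\lambda$ to uniqueness of the local maximum of $I(\theta)$ on the circle of in-plane measurement directions, and you correctly identify the two obvious critical points (the $\omega$-direction giving $I=0$ and a candidate maximum). But the entire content of the conjecture is the exclusion of further critical points, and that is exactly the step you leave unresolved.

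Two of your specific technical claims would fail. First, the equation $dI/d\theta = 0$ has the form $\sum_x p_{X=x}\, q_x'(\theta) \log\bigl(q_x(1-p_Y)/((1-q_x)p_Y)\bigr) = 0$; you cannot ``clear logarithms'' from a \emph{sum} of logarithms with non-constant trigonometric prefactors to obtain a trigonometric polynomial. Exponentiating a sum of the form $f_0 \log g_0 + f_1 \log g_1$ yields $g_0^{f_0} g_1^{f_1}$ with $\theta$-dependent exponents, which is still transcendental; there is no a priori bound on its number of zeros from degree counting. Second, the homotopy-in-purity fallback from Levitin's pure-state case does not control the critical-point count along the deformation: critical points can be created in pairs through saddle-node bifurcations under a continuous family, and ruling this out requires a nondegeneracy (e.g.\ strict negativity of $d^2I/d\theta^2$ at every interior critical point, uniformly along the path) that is itself essentially equivalent to the conjecture. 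Separate convexity of $I$ in $b$ and in the joint-distribution parameter, as you acknowledge, gives no control along the conic $\mathcal{C}$. The proposal is a reasonable research plan, but it is not a proof, and the statement should be treated as open.
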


Due to the results of  the previous section, without loss of
generality one  can restrict Keil's conjecture  by replacing
the  infinite domain  $[-\infty, +\infty]$  with the  finite
interval $[ - \lambda_*, \lambda_*]$; the domain of validity
of  the new  conjecture (possibly  the entire  set of  qubit
dichotomies)    is   the    same   as    that   of    Keil's
conjecture.

Notice that  the restriction to $[  - \lambda_*, \lambda_*]$
eliminates the global  minima ($x = \pm  \infty$), where the
function  is  not  pseudo-concave.  Indeed,  a  function  is
pseudo-concave if and only if for any $x$ and $y$ one has
\begin{align*}
  \frac{\partial f \left( x \right)}{\partial x} \left(y - x
  \right) \le 0 \Rightarrow f  \left( y \right) \le f \left(
  x \right).
\end{align*}

This fact  allows us to  conjecture the following,  where we
replace the quasi-concavity with pseudo-concavity.
\begin{conj}
  \label{conj:keil2}
  The mutual  information of  any given qubit  dichotomy and
  any von  Neumann measurement given by  the eigenvectors of
  Eq.~\eqref{eq:helstrom} for  any $- \lambda_*  \le \lambda
  \le \lambda_*$ is a pseudo-concave function of $\lambda$.
\end{conj}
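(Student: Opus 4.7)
The plan is to verify the defining property of pseudo-concavity on the compact interval $[-\lambda_*, \lambda_*]$: every interior critical point of $I(\lambda)$ must be a global maximum. Since $I(\lambda)$ is smooth wherever the two eigenvalues of $H(\lambda)$ are separated, which holds throughout the interior of the interval by construction of $\lambda_*$, this reduces to a one-variable calculus question about the stationary structure of an explicit function.

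First, I would derive a closed-form expression for the conditional distribution $p_{Y|X}(y|x,\lambda)$ induced by the eigenprojectors of $H(\lambda)$ on the qubit dichotomy $(\rho,\sigma)$. Diagonalizing a $2\times 2$ Hermitian matrix with parameterized tilt is explicit: the Bloch vector of the positive eigenprojector of $H(\lambda)$ is a trigonometric function of $\lambda$ rotating from the Helstrom direction at $\lambda = 0$ toward the $\omega$-eigenbasis as $|\lambda| \to \lambda_*$. From this and the priors, $I(\lambda)$ becomes a sum of binary entropies of explicit rational-trigonometric functions of $\lambda$, and $I'(\lambda)$ takes the log-ratio form already exploited in the proof of Lemma~\ref{lmm:info}.

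Second, I would analyze $I'(\lambda) = 0$. The strongest possible conclusion would be strict concavity of $I$ on $[-\lambda_*, \lambda_*]$, obtained by a sign computation of $I''(\lambda)$; this would imply pseudo-concavity directly. Failing that, the fallback is to argue that any interior zero of $I'$ is a local maximum (a second-order check at critical points), combined with a counting argument based on the boundary values $I(\pm \lambda_*)$ and the behavior at the symmetric point $\lambda = 0$, to rule out the alternation max--min--max that pseudo-concavity forbids. The symmetry $\lambda \mapsto -\lambda$ inherited from the structure of $H(\lambda)$ in the balanced case can be exploited to halve the work.

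The main obstacle is that the conjecture is essentially Keil's conjecture in a sharper form and has stood open for well over a decade. A direct strict-concavity bound is likely too strong: on the extended parameter $[-\infty, +\infty]$, $I(\lambda)$ has minima at $\pm\infty$ and so cannot be globally concave, and it is not obvious that truncation to $[-\lambda_*, \lambda_*]$ cleans up the sign of $I''(\lambda)$. Any proof will ultimately reduce to controlling the number of real solutions of a transcendental equation in $\lambda$ whose coefficients are Hilbert-Schmidt overlaps among $\rho$, $\sigma$ and $\omega$. That is precisely the step I expect to be the bottleneck, and testing the conjecture on highly symmetric families (commuting $\rho$ and $\sigma$, or orthogonal Bloch vectors, where the expressions collapse) would be a sensible preliminary sanity check before attempting the general bound.
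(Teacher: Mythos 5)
This statement is labeled as a \emph{conjecture} in the paper, and the paper offers no proof of it: the text only motivates it by observing that restricting the domain from $[-\infty,+\infty]$ to $[-\lambda_*,\lambda_*]$ removes the global minima at $\lambda=\pm\infty$ (where pseudo-concavity manifestly fails), and then upgrades Keil's still-open quasi-concavity conjecture to a pseudo-concavity statement on the truncated domain. Your proposal is therefore being measured against a target that the authors themselves do not claim to reach, and it does not reach it either: it is a research plan whose decisive step is explicitly deferred. Concretely, the reduction you state --- that for a smooth one-variable function on a compact interval, pseudo-concavity follows once every interior stationary point is a global maximum --- is sound, but neither of your two routes to establishing that property is carried out. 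The strict-concavity route ($I''(\lambda)\le 0$ on $[-\lambda_*,\lambda_*]$) is, as you yourself suspect, almost certainly false in general, since $I$ must turn around to approach its minima just outside the interval and nothing in the definition of $\lambda_*$ (which is determined by $\det H(\lambda)=0$, i.e.\ by where an eigenprojector of $H$ degenerates, not by any property of $I$) forces the inflection points of $I$ to lie outside $[-\lambda_*,\lambda_*]$. The fallback route --- bounding the number of interior zeros of $I'(\lambda)$ --- is precisely the transcendental-equation problem that has kept Keil's conjecture open since 2008; your proposal names it as the bottleneck but supplies no mechanism (no Rolle-type argument, no variation-diminishing property, no explicit resolvent) for controlling it.

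To be useful as a contribution rather than a restatement of the difficulty, your write-up would need at minimum one of: (i) a proof of concavity of $I$ on a subinterval together with monotonicity on the complementary pieces of $[-\lambda_*,\lambda_*]$; (ii) an argument that $I'(\lambda)$, written in the log-ratio form you mention, has at most one sign change on the interval (e.g.\ by showing the argument of the logarithm is monotone in $\lambda$); or (iii) an explicit counterexample, which would also be publishable. The ``sanity checks'' on commuting or Bloch-orthogonal dichotomies you propose are reasonable preliminaries but were already part of the numerical evidence underlying Keil's original conjecture, so they do not advance the proof. As it stands, the proposal contains no completed argument for any nontrivial case of the statement.
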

Notice  that concavity  implies  pseudo-concavity which,  in
turn, implies quasi-concavity.

In general,  the computation of the  accessible information,
even for  a qubit dichotomy,  is a non-convex  problem.  Not
surprisingly,   therefore,   known    algorithms   for   its
computation such as  SOMIM~\cite{REK05, SAE07, PFTV92, TD25}
are not guaranteed to converge  to the actual optimal value.
However,  under  conjecture~\ref{conj:keil2}, the  following
algorithm for the computation  of the accessible information
of any given qubit ensemble is guaranteed to converge to the
actual optimal value.

\begin{algo}
  \label{algo:bisect}
  Given  a   qubit  dichotomy  $\rho,  \sigma$   with  prior
  probability     distribution     $p_0,     p_1$,     under
  Conjecture~\ref{conj:keil2} the following converges to its
  accessible information:
  \begin{enumerate}
  \item   Initialize   $\lambda_{\min}   =   -   \lambda^*$,
    $\lambda_{\max} = \lambda^*$, and $\lambda' = 0$,
  \item \label{item:beginloop} Compute the derivative
    \begin{align*}
      I'  :=  \left.   \frac{\partial  I}{\partial  \lambda}
      \right|_{\lambda = \lambda'}
    \end{align*}
    of the  mutual information  of the given  dichotomy over
    the   eigenvectors   of   $H(\lambda)$   as   given   by
    Eq.~\eqref{eq:helstrom}  with  respect to  $\lambda$  in
    $\lambda = \lambda'$,
  \item if the derivative $I'$ is negative, set
    \begin{align*}
      \lambda_{\max} = \lambda',
    \end{align*}
    else
    \begin{align*}
      \lambda_{\min} = \lambda',
    \end{align*}
  \item Set
    \begin{align*}
      \lambda' = \frac{\lambda_{\min} + \lambda_{\max}}2.
    \end{align*}
  \item  Repeat  from  step~\ref{item:beginloop}  until  the
    desired  precision   is  achieved;  output   the  mutual
    information of the given dichotomy over the eigenvectors
    of $H(\lambda')$.
  \end{enumerate}
\end{algo}

This is  a bisecting algorithm, and  therefore is guaranteed
(under Conjecture~\ref{conj:keil2}) to converge to arbitrary
precision in  logarithmic time  to the  actual value  of the
accessible       information.       We       provide      an
implementation~\cite{TD25} of Algorithm~\ref{algo:bisect}.

\section{Conclusion}
\label{sect:conclusion}

We  investigated the  tradeoff relations  between accessible
information  and guessing  probability of  ensembles of  two
quantum states,  or dichotomies.   Our first result  was the
closed-form   characterization   of   the  set   of   binary
conditional probability  distributions for which  the mutual
information is a monotone  in the guessing probability, thus
disproving previous  statements on the monotonicity  of such
quantities.    Our  second   result   was  the   closed-form
characterization of the set  of all von Neumann measurements
that  generate extremal  probability distributions  when fed
any given  qubit dichotomy, thus tightening  a conjecture by
Keil   on   the   measurement   attaining   the   accessible
information.

\section{Acknowledgments}

K.~D.~A.~T.   acknowledges support  from  the Department  of
Computer  Science and  Engineering, Toyohashi  University of
Technology. M.~D.  acknowledges  support from the Department
of Computer Science and Engineering, Toyohashi University of
Technology, from the International  Research Unit of Quantum
Information,  Kyoto University,  and from  the JSPS  KAKENHI
grant number JP20K03774.

\end{document}